\documentclass[english,11pt,a4paper]{article}
\usepackage[english]{babel}
\usepackage{amsmath}
\usepackage{latexsym,amssymb,amsfonts,amscd,epsfig,amsthm,color,mathrsfs}

\usepackage{aliascnt}
\usepackage{graphicx}
\usepackage{tikz}
\usetikzlibrary{decorations.pathreplacing, shapes.geometric}
\usepackage{float}
\usepackage[left=2.3cm,right=2.3cm,top=2.3cm,bottom=2.3cm]{geometry}
\usepackage{ifthen}
\usepackage{authblk}
\usepackage{colonequals}
\usepackage{array}
\usepackage{todonotes}

\makeatletter
\newtheorem*{rep@theorem}{\rep@title}
\newcommand{\newreptheorem}[2]{%
\newenvironment{rep#1}[1]{%
 \def\rep@title{#2 \ref*{##1}}%
 \begin{rep@theorem}}%
 {\end{rep@theorem}}}
\makeatother

\usepackage{hyperref}
\hypersetup{
    bookmarksnumbered=true, 
    unicode=false, 
    pdfstartview={FitH}, 
    pdftitle={}, 
    pdfauthor={}, 
    pdfsubject={}, 
    pdfcreator={}, 
    pdfproducer={}, 
    pdfkeywords={}, 
    pdfnewwindow=true, 
    colorlinks=true, 
    linkcolor=blue, 
    citecolor=blue, 
    filecolor=blue, 
    urlcolor=blue 
}

\usepackage[nameinlink]{cleveref}
\usepackage{quantikz}
\usepackage{comment}
\Crefname{section}{Section}{Sections}

\newtheorem{theorem}{Theorem}[section]

\newaliascnt{definition}{theorem}
\newtheorem{definition}[definition]{Definition}
\aliascntresetthe{definition}
\Crefname{definition}{Definition}{Definitions}

\newaliascnt{lemma}{theorem}
\newtheorem{lemma}[lemma]{Lemma}
\aliascntresetthe{lemma}
\Crefname{lemma}{Lemma}{Lemmas}

\newaliascnt{proposition}{theorem}

\aliascntresetthe{proposition}
\Crefname{proposition}{Proposition}{Propositions}

\newaliascnt{corollary}{theorem}

\aliascntresetthe{corollary}
\Crefname{corollary}{Corollary}{Corollaries}

\newaliascnt{claim}{theorem}

\aliascntresetthe{claim}
\Crefname{claim}{Claim}{Claims}

\newaliascnt{example}{theorem}

\aliascntresetthe{example}
\Crefname{example}{Example}{Examples}

\newaliascnt{conjecture}{theorem}

\aliascntresetthe{conjecture}
\Crefname{conjecture}{Conjecture}{Conjectures}

\newaliascnt{aside}{theorem}

\aliascntresetthe{aside}
\Crefname{aside}{Aside}{Asides}

\newaliascnt{remark}{theorem}

\aliascntresetthe{remark}
\Crefname{remark}{Remark}{Remarks}

\def\ket#1{{\lvert}#1\rangle}

\def\bra#1{{\langle}#1\rvert}
\def\braket#1#2{{{\langle}#1\vert}#2\rangle}
\def\abs#1{\left| #1 \right|}

\def\norm#1{\left\| #1 \right\|}

\DeclareMathOperator{\E}{E}

\newcommand{\eps}{\varepsilon}

\def\O{\mathrm{O}}
\def\tO{\widetilde{\mathrm{O}}}

\newcommand{\ketbra}[2]{|{#1}\rangle\!\langle{#2}|}

\newcommand{\cA}{\mathcal{A}}
\newcommand{\cB}{\mathcal{B}}
\newcommand{\cC}{\mathcal{C}}

\newcommand{\cZ}{\mathcal{Z}}

\renewcommand{\a}{\alpha}
\renewcommand{\b}{\beta}

\newcommand{\w}{\omega}

\newcommand{\spann}[1]{\mathrm{span}\left\{ #1 \right\}}
\newcommand{\spaan}[1]{\mathrm{span}\left( #1 \right)}

\newcommand{\set}[1]{\left\{ #1 \right\}}
\newcommand{\inparen}[1]{\left( #1 \right)}
\newcommand{\insquare}[1]{\left[ #1 \right]}

\newcommand{\blackcircle}{\text{\textbullet}}

\newcommand{\smallblacksquare}{\text{\scalebox{0.6}{\ensuremath{\blacksquare}}}}
\newcommand{\blackdiamond}{\text{\scalebox{0.7}{\ensuremath{\blacklozenge}}}}
\newcommand{\scalestar}{\text{\scalebox{1.1}{\ensuremath{\star}}}}

\title{Loop Composition in Quantum Algorithms}
\author[1,2]{Stacey Jeffery}
\author[2]{Manideep Mamindlapally}
\author[]{Alex Baudoin Nguetsa Tankeu}
\affil[1]{QuSoft \& CWI, Amsterdam}
\affil[2]{University of Amsterdam}
\date{}

\begin{document}

\maketitle

\begin{abstract}
The quantum circuit model essentially treats every quantum algorithm as a \emph{straight-line program}. While this view is universal, recent work has shown that it is inconvenient for using different-length quantum subroutines in superposition. Using the quantum walk formalism of quantum algorithms, it is possible to model such \emph{branching} behaviour, and get better composition in this setting. 

We apply the above branching composition to Grover's algorithm, which gives a variable-time quantum search algorithm that is \emph{worse} than previous work. The reason it is worse is because branching composition does not take into account another deviation from straight-line programs: looping. We show that by modifying branching composition to also include looping, we can get a complexity that matches previous work. This highlights the importance of properly modeling the program control flow when designing quantum algorithms. 
\end{abstract}


\section{Introduction}

Any deterministic classical algorithm can be modeled as a \emph{straight-line program}, which is a sequence of operations, applied one after the other, starting from some initial state, and ending in some final state, which should encode the program's output. From this description, it is also clear that a (unitary) quantum circuit, which is the most commonly used model of quantum algorithms, is also a straight-line program, but one where we allow quantum operations, on quantum states. 

\begin{figure}[ht]
\centering
\begin{tikzpicture}[
    spA/.style={circle, draw, thick, minimum size=6mm, inner sep=1pt
    },
    spB/.style={circle, minimum size=6mm, inner sep=1pt
    },
    elabel/.style={
    fill=white, inner sep=1pt},
    slabel/.style={
    },
    >={Stealth[length=5pt]}
]

\node[spA] (P1) at (0, 0) {};
\node[spA] (P2) at (1.5, 0) {};
\node[spA] (P3) at (3, 0) {};
\node[spB] (Pdots) at (4.5, 0) {$\dots$};
\node[spA] (Plast) at (6, 0) {};

\draw[thick, ->] (-1.5,0) -- (P1) node[midway, above] {\small input};
\draw[thick, ->] (P1) -- (P2);
\draw[thick, ->] (P2) -- (P3);
\draw[thick, ->] (P3) -- (Pdots);
\draw[thick, ->] (Pdots) -- (Plast);
\draw[thick, ->] (Plast) -- (7.5,0) node[midway, above] {\small output};

\end{tikzpicture}
\caption{Visualization of a straight-line program.}\label{fig:straight-line-program}
\end{figure}

While the quantum circuit model is universal, one drawback is that it hides basically all structure of the program, aside from its length. One example of where this can give rise to a concrete disadvantage is in subroutine composition. If one of the operations represented by a node in the graph in \Cref{fig:straight-line-program} is something complicated, to be implemented by a subroutine, then this subroutine can also be represented by a straight-line program, which can then be inserted to replace the edge. However, imagine, as is often the case in a quantum algorithm, that this operation is applied to a superposition, and its behaviour is controlled on which branch of the superposition it is running in. One way to think of such a subroutine is as $N$ quantum circuits, specified by sequences of unitaries (i.e. straight-line programs): $\{U_{T_i}^i\dots U_1^i\}_{i=1}^N$, 
where we want to apply the $i$-th circuit in the $i$-th superposition branch. 
The superposition structure is completely lost in the straight-line program picture. To use this subroutine, we need to express it as a single straight-line program, that calls $U_{T_i}^i\dots U_1^i$ controlled on some particular register containing value $\ket{i}$. Even in the quantum random access model, this represents a straight-line program of length $\max_i T_i$. 

To get around this, we need to enhance our straight-line programs to allow them to express superposition branching behaviour, which is quite easy to do visually -- see \Cref{fig:branching} -- and somewhat more technical to do mathematically.

\begin{figure}[ht]
\centering
\begin{tikzpicture}
\node at (1.5,0){
\begin{tikzpicture}[scale=.7,
    spA/.style={circle, draw, thick, minimum size=4mm, inner sep=1pt
    },
    spAd/.style={circle, draw, thick, dashed, minimum size=4mm, inner sep=1pt
    },
    spB/.style={circle, minimum size=6mm, inner sep=1pt
    },
    elabel/.style={
    fill=white, inner sep=1pt},
    slabel/.style={
    },
    >={Stealth[length=5pt]}
]
\node[spA] (P1) at (0, 0) {};
\node[spAd] (P2a) at (1.5, 1) {};
\node[spAd] (P2b) at (1.5, 0) {};
\node[spAd] (P2c) at (1.5, -1) {};
\node[spA] (P3) at (3, 0) {};
\node[spB] (Pdots) at (4.5, 0) {$\dots$};
\node[spA] (Plast) at (6, 0) {};
\draw[thick, ->] (-1.5,0) -- (P1);
\draw[thick, ->] (P1) to[bend left=20] (P2a);
\draw[thick, ->] (P1) -- (P2b);
\draw[thick, ->] (P1) to[bend right=20] (P2c);
\draw[thick, ->] (P2a) to[bend left=20] (P3);
\draw[thick, ->] (P2b) -- (P3);
\draw[thick, ->] (P2c) to[bend right=20] (P3);
\draw[thick, ->] (P3) -- (Pdots);
\draw[thick, ->] (Pdots) -- (Plast);
\draw[thick, ->] (Plast) -- (7.5,0);
\end{tikzpicture}};

\node at (5.34,0) {$\mapsto$};

\node at (10,0){
\begin{tikzpicture}[scale=.6,
    spA/.style={circle, draw, thick, minimum size=4mm, inner sep=1pt
    },
    spB/.style={circle, minimum size=4mm, inner sep=1pt
    },
    elabel/.style={
    fill=white, inner sep=1pt},
    slabel/.style={
    },
    >={Stealth[length=5pt]}
]
\node[spA] (P1) at (0, 0) {};

\node[spA] (T1) at (1.5, 1.5) {};
\node[spA] (T2) at (3.0, 1.5) {};

\node[spA] (M1) at (1.5, 0) {};
\node[spA] (M2) at (3.0, 0) {};
\node[spA] (M3) at (4.5, 0) {};
\node[spA] (M4) at (6.0, 0) {};

\node[spA] (B1) at (1.5, -1.5) {};
\node[spA] (B2) at (3.0, -1.5) {};
\node[spA] (B3) at (4.5, -1.5) {};

\node[spA] (P3) at (7.5, 0) {};
\node[spB] (Pdots) at (9, 0) {$\dots$};
\node[spA] (Plast) at (10.5, 0) {};

\draw[thick, ->] (-1.5,0) -- (P1); 

\draw[thick, ->] (P1) to[bend left=20] (T1);
\draw[thick, ->] (P1) -- (M1);
\draw[thick, ->] (P1) to[bend right=20] (B1);

\draw[thick, ->] (T1) -- (T2);

\draw[thick, ->] (M1) -- (M2);
\draw[thick, ->] (M2) -- (M3);
\draw[thick, ->] (M3) -- (M4);

\draw[thick, ->] (B1) -- (B2);
\draw[thick, ->] (B2) -- (B3);

\draw[thick, ->] (T2) to[bend left=20] (P3);
\draw[thick, ->] (M4) -- (P3);
\draw[thick, ->] (B3) to[bend right=20] (P3);

\draw[thick, ->] (P3) -- (Pdots);
\draw[thick, ->] (Pdots) -- (Plast);
\draw[thick, ->] (Plast) -- (12,0); 
\end{tikzpicture}};
\end{tikzpicture}
\caption{Visualization of superposition branching, and a subroutine whose behaviour differs per branch of the superposition.}\label{fig:branching}
\end{figure}

This was first done in~\cite{jeffery2023subroutines}, where the following was proven (see \Cref{sec:straight-line-comp} for a formal statement):
\begin{theorem}[Informal]\label{thm:comp-informal}
Fix any quantum algorithm ${\cal A}$, defined by $U_1,\dots,U_{\sf L}$, that computes $g:[M]\rightarrow\{0,1\}$, where each $U_t$ represents either a basic operation, or a call to a subroutine computing a function $f:[N]\rightarrow\{0,1\}$. Fix a quantum algorithm that computes $f$, using $T_i$ basic operations, on input $i\in [N]$. Then there is a quantum algorithm that computes $g$ with bounded error in complexity
$${\sf Q}\cdot {\sf T}_{\mathrm{avg}}+{\sf L}$$
(neglecting log factors), where ${\sf Q}$ is the number of subroutine calls made by $\cal A$, and ${\sf T}_{\mathrm{avg}}=\sum_i \bar{q}_i T_i$ is a weighted average of the complexities $T_i$. (See \Cref{thm:composition-formal} for the definition of $\bar{q}_i$, and other details). 
\end{theorem}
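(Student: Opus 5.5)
The plan is to work in the quantum-walk (transducer / span-program-like) formalism of \cite{jeffery2023subroutines}. In that formalism an algorithm is converted into an object whose complexity is governed by two quantities: a \emph{positive witness size} $W_+$ and a \emph{negative witness size} $W_-$. The complexity then scales like $\sqrt{W_+W_-}$, or more precisely like a total-work quantity.

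\textbf{Step 1: encode $\mathcal A$.} Write $\mathcal A$ as a straight-line program $U_{\sf L}\cdots U_1$. Convert it into a transducer (a walk on a path graph of length ${\sf L}$) whose \emph{query state} records, for each time step $t$ at which a subroutine call occurs, the branch $\ket{i}$ on which the call is made. Let $\bar q_i$ be the total squared amplitude with which input $i$ is queried, summed over the ${\sf Q}$ calls and normalised so that $\sum_i \bar q_i$ is proportional to ${\sf Q}$. On this path the witness for the outer algorithm has size $O({\sf L})$ from the non-query steps. Each query step contributes a term weighted by the query state.

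\textbf{Step 2: encode each branch of the subroutine.} For every $i$, convert the subroutine's circuit on input $i$ into its own transducer, namely a path of length $T_i$. Its witness size on input $i$ is $O(T_i)$ (again up to logs), and it can be implemented with only $\tO(1)$ cost per step. The transducers are independent across branches.

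\textbf{Step 3: compose.} Invoke the composition theorem for transducers (the formal \Cref{thm:composition-formal} setup). Each query edge in the outer graph is replaced by the inner graph of the corresponding branch, which gives the branching picture of \Cref{fig:branching}. The composed witness size is then the outer non-query part, ${\sf L}$, plus, for each query, the query-state amplitude squared times the inner witness size. That sum is $\sum_t\sum_i |\alpha_{t,i}|^2 T_i={\sf Q}\sum_i \bar q_i T_i$. Finally, convert the composed transducer back to a bounded-error algorithm by phase estimation. This gives cost $\tO({\sf Q}\,{\sf T}_{\rm avg}+{\sf L})$, with the implementation cost of one walk step being polylogarithmic given QRAM-style access to the branch index.

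The main obstacle is Step 3, specifically controlling the error. The subroutine has bounded error, not zero error, so the composed transducer is only approximately correct. Errors from ${\sf Q}$ calls accumulate in the witness, so the first thing I would try is amplifying each inner subroutine to error $O(1/{\sf Q})$ at $O(\log {\sf Q})$ overhead, which is absorbed into the neglected logs. A second difficulty is that the amplitudes $\alpha_{t,i}$ are input dependent. The bound must therefore be stated with $\bar q_i$ as the true query weights on the given input, which is why ${\sf T}_{\rm avg}$ is a weighted average rather than a maximum.
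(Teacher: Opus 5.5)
Your plan matches the argument behind this result, which the paper does not reprove but cites from \cite{jeffery2023subroutines}, reviewing its machinery in \Cref{sec:phase-estimation} and reusing it in \Cref{sec:general-picture}: unroll $\mathcal A$ into a line-shaped overlap graph, splice in per-branch inner transition spaces for the subroutine, and apply \Cref{thm:phase-est-fwk} to history-state witnesses whose squared norms, by orthogonality of the branches, add up to about ${\sf L}+{\sf Q}\sum_i\bar q_i\E[T_i]$. Two details your sketch glosses over are handled explicitly there: each inner path is a forward run joined at the halting time to a backward, uncomputing run (the spaces $\Psi^{i,\to}_t$, $\Psi^{i,\leftrightarrow}_t$, $\Psi^{i,\leftarrow}_t$), and the weights and phase-estimation precision are set from a known bound ${\sf T}_{\mathrm{avg}}$ valid for every input, as in \Cref{thm:composition-formal}, rather than from the input-specific $\bar q_i$.
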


This is the quantum analogue of what one takes for granted in classical composition of randomized programs: when a subroutine's running time depends on some random variable, the complexity of a subroutine call scales with the average (expected) running time of the subroutine, not the maximum. 

\paragraph{Variable-time quantum search.} In fact, the use of quantum subroutines where the precise running time depends on which branch of the superposition you are in had previously been studied in the context of quantum search. In standard \emph{black-box search}, one is given an oracle $f:[N]\rightarrow\{0,1\}$, and wants to decide if there exists $i\in [N]$ such that $f(i)=1$. The setting of \emph{variable-time} (quantum) search models the case where this oracle is instantiated by a subroutine for $f$ that takes time $T_i$ on input $i$. Specifically, it is assumed that the subroutine is given via an operator that can apply any $U_t^i$ controlled on $\ket{t}\ket{i}$ in some auxiliary registers. We describe this more precisely in \Cref{sec:VT-model}, but it is similar to the quantum random access model (see also \cite[Section~2.2]{jeffery2022kDist}). For simplicity, throughout this introduction, we will consider a promise version of variable-time search, where in the case that there exists $m\in [N]$ such that $f(m)=1$, it is unique. 

Naively, if we simply take Grover's algorithm and instantiate the oracle with the given subroutine, the complexity of variable-time search is: 
\begin{equation}\label{eq:naive-VT}
    \textrm{Naive variable-time search}: \tO\left(\sqrt{N}\max_{i\in [N]}T_{i}\right).
\end{equation}
However, in \cite{ambainis2010VTSearch}, Ambainis showed how to accomplish this task in complexity:
\begin{equation}\label{eq:amb-VT}
    \textrm{$\ell_2$-variable-time search}: \widetilde{O}\left(\sqrt{\sum_{i\in [N]}T_i^2}\right).
\end{equation}
Throughout this work, we let $\widetilde{O}(g(N))=g(N)\cdot \text{poly}(\log g(N))$. 

The expression in~\eqref{eq:amb-VT} is better than the naive expression in \eqref{eq:naive-VT}, 
since $\sqrt{\sum_iT_i^2}=\sqrt{N \frac{1}{N}\sum_iT_i^2}$, and $\sqrt{\frac{1}{N}\sum_iT_i^2}$ is the $\ell_2$-average of the running times, which is always at most the maximum, and can be arbitrarily smaller. An alternative algorithm achieving \eqref{eq:amb-VT}, with a matching lower bound, can be found in~\cite{ambainis2023ImprovedAlgorithmLower}.

\paragraph{Our contribution.} The results of \Cref{thm:comp-informal} apply to a much wider class of quantum algorithms than the variable-time search result, but is the variable-time search upper bound in \eqref{eq:amb-VT} a special case of this theorem, where the outer algorithm is simply Grover's optimal quantum algorithm for search~\cite{grover1996QSearch}? In \Cref{sec:grover}, we prove that it is not, by analyzing the complexity of variable-time search if we address it using \Cref{thm:comp-informal} on Grover's algorithm. Our analysis, which is tight up to constants, gives an upper bound no better than the naive upper bound in \eqref{eq:naive-VT}. 

This leads us to ask the question: Why does \Cref{thm:comp-informal} get a suboptimal result in the context of search? Is this because it's suboptimal in general? We theorize that what is actually suboptimal is \emph{treating Grover's algorithm as a straight-line program}. Grover's algorithm essentially loops through the same two reflections some sufficiently large number $O(\sqrt{N})$ of times, so it is not very natural to view it as a line of this length, and in fact, unrolling it into such a line appears to make composition work worse, by not allowing interference to occur between different iterations of the loop. 

To support this speculation, in \Cref{sec:loop-composition}, we show how to modify the program composition of \cite{jeffery2023subroutines} described in \Cref{thm:comp-informal} -- which we call  \emph{straight-line composition} -- to what we call \emph{loop composition}, where we compose subroutines with an outer algorithm that is viewed, not as a straight line (with branching), but a kind of loop (also incorporating branching). In this way, we recover the variable-time search upper bound of~\eqref{eq:amb-VT}, as well as several other known upper bounds~\cite{jeffery2023subroutines}. 

Our variable-time search upper bounds are not new, and not a contribution in and of themselves. We feel the contribution of loop composition is instead a conceptual one. \cite{jeffery2023subroutines} showed that if one always views quantum algorithms as straight-line programs, one fails to adequately capture their superposition branching behaviour, and shows how to capture this in the context of program composition. We add to this picture by showing that it is also sub-optimal to view all quantum programs as straight lines with superposition branching: one also sometimes needs to capture looping behaviour, and we show how to capture this within the same model as~\cite{jeffery2023subroutines}.

\subsection{Comparison with other work}

The composition results of \Cref{thm:comp-informal} have been improved to apply not only to quantum algorithms computing a single bit, using a new model called \emph{transducers}~\cite{belovs2024Taming}. The quantum-walk based algorithms discussed in \cite{jeffery2023subroutines} and here are a special case of transducers, but they are still very relevant in that they give some structure to the otherwise quite abstract model of transducers (see also~\cite{jeffery2024subspace} for a way of formalizing this use of graph structure). The moral of this work therefore applies to transducers as well. 

\cite{jeffery2023subroutines} also studies variable-time search, but through the lens of quantum walks. Our results in~\Cref{sec:loop-composition} are structurally similar to these, but are more direct in that they do not require the setting of a quantum walk on a graph, and are instead directly comparable to the straight-line program results mentioned in~\Cref{thm:comp-informal}.

\section{Preliminaries}

We first mention some notation. For a positive integer $N$, we let $[N]\colonequals\{1,\dots,N\}$, and $[N]' \colonequals \set{0} \cup [N]$. 
We use the notation $\ket{\varphi}$ for a vector, which is not necessarily normalized. If $\ket{\varphi_1}, \ket{\varphi_2}, \cdots$ are some vectors, $\spann{\ket{\varphi_1}, \ket{\varphi_2}, \cdots}$ denotes the span of the vectors $\ket{\varphi_1}, \ket{\varphi_2}, \cdots$.
If $\Phi$ is a set of vectors, $\spaan{\Phi}$ denotes the span of all the vectors in $\Phi$.

\subsection{Black-box search}\label{sec:black-box}

The abstract problem \emph{black-box search} models virtually any unstructured search problem. Its input is a \emph{black box} or \emph{oracle}, $f:[N]\rightarrow\{0,1\}$, by which we mean, we assume the ability to evaluate $f$ on any input $i\in [N]$. For quantum algorithms, we assume this means we can implement the unitary:
$$U_f:=2\sum_{i\in [N]:f(i)=0}\ket{i}\bra{i}-I:\ket{i}\mapsto \left\{\begin{array}{ll}
\ket{i} & \mbox{if }f(i)=0\\
-\ket{i} & \mbox{if }f(i)=1.
\end{array}\right.$$
Then our goal is, using calls to such a unitary, to decide if there exists $i\in [N]$ such that $f(i)=1$. This is the \emph{decision version} of black-box search, which we will consider in this paper. Alternatively we could consider the problem of \emph{finding} a marked element if it exists, but it is not difficult to reduce one version of the problem to the other, so we lose little by restricting to the simpler decision version.

\subsection{Variable-time quantum algorithms}\label{sec:VT-model}

Here we define variable-time algorithms following the notation of~\cite{jeffery2023subroutines}, which is based closely on the definition of~\cite{ambainis2010VTAA}.

\paragraph{Quantum algorithms.} A quantum algorithm is defined by a sequence of unitaries $U_1,\dots,U_{\sf T}$ acting on a space 
$$H_I\otimes H_A\otimes H_Z=\mathrm{span}\{\ket{i}\ket{a}\ket{z}:i\in {\cal I},a\in {\cal A},z\in {\cal Z}\}$$
for some finite sets ${\cal I}$, ${\cal A}$, and ${\cal Z}$. 

The space $H_I$ is the \emph{input} space, and we assume each unitary is controlled on this space, but does not change it, meaning that for each $t\in [{\sf T}]$, we can express
$$U_t=\sum_{i\in {\cal I}}\ket{i}\bra{i}\otimes U_t^i$$
for some unitaries $U_t^i$ acting on $H_A\otimes H_Z$. 
The unitaries $U_t^i$ are assumed to be taken from some set of ``basic'' unitaries -- for example, 2-local gates from some finite gate set, or the set of all constant depth circuits in some fixed gate set -- such that each unitary represents a single ``step'' of the algorithm.
Throughout this paper, we will always have ${\cal I}=[N]'=\{0\}\cup [N]$. 

The space $H_A$ is the \emph{answer space}, and throughout this paper, we will always have ${\cal A}=\{0,1\}$. We say the algorithm computes $f:{\cal I}\rightarrow{\cal A}$ with bounded error $\eps$ if for all $i\in {\cal I}$, the probability of measuring $f(i)$ in register $A$ after running the algorithm on input $i$ is at least $1-\eps$:
$$\norm{(I_I\otimes \ket{f(i)}\bra{f(i)}\otimes I_Z)U_{\sf T}\dots U_1\ket{i,0,0}}^2 = \norm{(\ket{f(i)}\bra{f(i)}\otimes I_Z)U_{\sf T}^i\dots U_1^i\ket{0,0}}^2\geq 1-\eps.$$
The space $H_Z$ is the workspace of the algorithm, and it can be somewhat arbitrary.

\paragraph{Model.} As is standard in the setting of variable-time algorithms, we assume that we can implement the unitary $\sum_{i,t}\ket{i}\bra{i}\otimes \ket{t}\bra{t}\otimes U_t^i$ in unit cost. This is not generally true in the gate model -- except for extremely structured cases, it is not possible to implement such a unitary in $O(1)$ local gates. However, this is reasonable in the quantum random access model, with some minimal structure on the $\{U_t^i\}_{i,t}$. A detailed discussion can be found in~\cite[Section 2.2]{jeffery2022kDist}.

\paragraph{Variable-time quantum algorithms.} Informally, a \emph{variable-time} quantum algorithm is a quantum algorithm where, at each step between $1$ and ${\sf T}$, there is some probability of the algorithm terminating. For the algorithm to have the possibility of terminating at some intermediate step $t$, it is necessary to perform a measurement at this step, to get some result that would indicate that the answer has been found, and there is no need to run the remaining computation steps. This is in contrast to the standard paradigm, where we usually run the algorithm for the full $\sf T$ steps before measuring the answer. 

Thus, formally, a variable-time quantum algorithm is a quantum algorithm that is augmented with two-outcome measurements $\{\Pi_{\leq t}\}_{t=1}^{{\sf T}}$, where if we measure $\Pi_{\leq t}$ after applying $U_t$, then we know the algorithm is ``done'', and we can measure the answer right away. In order to ensure that these measurements don't change the behaviour of the algorithm, we define them from subspaces 
\begin{equation}\label{eq:H-spaces}
\{0\}=H_0\subseteq H_1\subseteq \dots\subseteq H_{\sf T}=H_Z
\end{equation}
such that $U_t$ leaves $H_{t-1}$ invariant: 
$$U_t\Pi_{\leq t-1} = \Pi_{\leq t-1},$$ 
where $\Pi_{\leq t-1}$ is the orthogonal projector onto $H_{t-1}$. 
Since future unitaries don't change the ``done'' part of the state, the output behaviour of the algorithm is the same whether you perform the intermediate measurements or simply measure the answer register at the end -- though of course, if you perform the intermediate measurements, you might halt after significantly less than ${\sf T}$ steps.

We will assume that the measurements commute with the computational basis of $H_Z$, meaning that for each $t\in [{\sf T}]'$, there exists a set ${\cal Z}_{\leq t}\subseteq {\cal Z}$ such that 
$$H_t=\mathrm{span}\{\ket{z}:z\in {\cal Z}_{\leq t}\}.$$
Then by \eqref{eq:H-spaces}, we must have 
$$\emptyset = {\cal Z}_{\leq 0}\subseteq {\cal Z}_{\leq 1}\subseteq\dots\subseteq {\cal Z}_{\leq {\sf T}}={\cal Z}.$$
Defining ${\cal Z}_t = {\cal Z}_{\leq t}\setminus {\cal Z}_{\leq t-1}$, we get $\overline{H}_t:= H_t\cap H_{t-1}^\bot=\mathrm{span}\{\ket{z}:z\in {\cal Z}_t\}$. Finally, define ${\cal Z}_{>t}={\cal Z}\setminus {\cal Z}_{\leq t}$. 

\paragraph{Example.} For clarity, we give an example of the form a variable-time algorithm might take. Imagine a quantum algorithm consisting of multiple blocks, where after each block you measure, and if your measurement is successful, you halt, and otherwise, you move on to the next block. An example of this is a quantum search algorithm where you don't know the number of marked elements, so you first run Grover's algorithm repeatedly using about $\sqrt{2},\sqrt{4},\sqrt{8},\dots,\sqrt{N}$ iterations in successive rounds. After each round, you measure, and if the measurement result is a marked element, you halt, but otherwise, you continue with the next round. After the last round of $\sqrt{N}$ iterations, if you have still not found a marked element, you conclude there is none. 

More generally, suppose you have a quantum algorithm consisting of $B$ blocks: 
$$\underbrace{\tilde U_{N_1+\dots+N_B}\dots\tilde U_{N_1+\dots+N_{B-1}+1}}_{\mbox{Block $B$}},
\dots,\underbrace{\tilde U_{N_1+N_2}\dots \tilde U_{N_1+1}}_{\mbox{Block 2}}, \underbrace{\tilde U_{N_1}\dots \tilde U_1}_{\mbox{Block 1}}$$
where ${\sf T}=N_1+\dots+N_B$, and each $\tilde U_t$ acts on a space 
$$H_I\otimes H_A \otimes H_{Z'} = \mathrm{span}\{\ket{i}\ket{a}\ket{z}:i\in {\cal I}, a\in {\cal A}, z\in {\cal Z}'\},$$
and the idea is to run Block 1, perform some measurement $\{\Pi,I-\Pi\}$ of $H_{Z'}$, and then if the outcome is 1, measure $H_A$ and output the resulting answer, and otherwise continue with Block 2, repeat the same measurement, etc. This is not a variable-time quantum algorithm as described above, because the measurement generally changes the outcome. We can make it into a variable-time quantum algorithm by adding two extra registers, to get
$$H_Z = H_{Z'}\otimes H_F \otimes H_C = H_{Z'}\otimes\mathrm{span}\{\ket{0},\ket{1}\}\otimes\mathrm{span}\{\ket{k}:k\in \{0,\dots,B\}\}.$$
Both additional registers are initialized to $\ket{0}$. The register $F$ is a flag register to indicate that the algorithm is done, and the register $C$ is incremented at each block, until the algorithm is done, at which point it stops being incremented. That is, for each $t\in [{\sf T}]$, we will replace $\tilde U_t$ as follows. If $t$ is not equal to $N_1+\dots+N_k$ for any $k\in [B]$, meaning $\tilde U_t$ is not the last unitary in a block, then $U_t$ will simply apply $\tilde U_t$ to $IAZ'$, controlled on $\ket{0}$ in the flag register $F$.
Otherwise, if $t=N_1+\dots+N_k$, $U_t$ will consist of several sub-operations:
\begin{enumerate}
    \item Controlled on $\ket{0}$ in $F$, apply $\tilde U_t$ to $IAZ'$.
    \item Controlled on $\ket{k-1}$ in register $C$, coherently perform the measurement $\{\Pi,I-\Pi\}$, and xor the result into $F$.
    \item Controlled on $\ket{0}$ in $F$, increment the register $C$.
\end{enumerate}
An important observation is that for any $t\in \{N_1+\dots+N_{k-1},\dots,N_1+\dots+N_{k}-1\}$ for $k\in [B]$, after applying $U_t$, the state of the algorithm in the last two registers is in
$$\mathrm{span}\{\ket{0}_F\ket{k-1}_C\}\oplus \mathrm{span}\{\ket{1}_F\ket{k'}_C:k'<k-1\}.$$
That is, either the flag is 0, and the block counter register carries the number of completed blocks, $k-1$, or the flag is set to 1, meaning the algorithm is finished, and the block counter contains some $k'<k-1$, indicating the last block to terminate before setting the flag (i.e. halting). 

Define, for all $k\in [B]$, for all $t \in \{N_1+\dots+N_{k-1},\dots,N_1+\dots+N_k-1\}$,
$$H_t=\mathrm{span}\{\ket{z}_{Z'}\ket{1}_F\ket{k'}_C:z\in {\cal Z}',k'<k-1\}.$$
Finally, define:
$$H_{\sf T}=H_{N_1+\dots +N_B}=H_Z.$$
Then we have:
\begin{multline}
\{0\} = H_0=H_1=\dots=H_{N_1-1}\subset H_{N_1}=\dots = H_{N_1+N_2-1} \subset\\
\dots \subset H_{N_1+\dots+N_{B-1}}=\dots = H_{N_1+\dots+N_B-1} \subset H_{N_1+\dots+N_B}=H_Z.
\end{multline}
To see that $U_t$ leaves $H_{t-1}$ invariant, note that $\tilde U_t$ is never applied, since the flag is always set to $1$ in $H_{t-1}$. Moreover, for $t=N_1+\dots+N_k$, any state in $H_{t-1}$ is supported on values $k'<k-1$ in register $C$, so the coherent measurement that sets the flag is not performed (i.e. sub-operation 2 does nothing) and thus the flag remains set at sub-operation 3, which also then does nothing. 

Thus, this is a variable-time algorithm, and it is also easy to check that it has the same input-output behaviour as the original algorithm. 

\subsection{Straight-line subroutine composition}\label{sec:straight-line-comp}

Imagine a randomized algorithm that computes a function $g$ on input $x$ by making calls to an oracle that, on input $i$, computes $f_x(i)$.
Throughout most of this paper, we leave the input $x$ implicit, but in this section, we want to make it explicitly clear that $x$ is the input to the algorithm, and $i$ is generally an input to the oracle/subroutine. The subroutine's behaviour can depend on both of these, but because of their different statuses, we put $x$ in the subscript, and $i$ in the parentheses. 

Having fixed the oracle's behaviour, we can analyze the algorithm assuming the oracle behaves as expected, but if we ever want to actually run the algorithm, we would need to instantiate the oracle by a subroutine. Suppose we have a subroutine that implements the oracle (let us assume it does so perfectly) in $T_{i,x}$ steps on inputs $i$ and $x$, where $T_{i,x}$ is a random variable. Let $Q_{i,x}$ be the random variable representing the number of times the algorithm, with input $x$, queries the subroutine on input $i$, and assume this is independent of $T_{i,x}$. Then if $L_x$ is the number of operations made by the algorithm outside of subroutine calls, the expected running time of the algorithm on input $x$ can be computed as:
\begin{equation}\label{eq:straight-line}
\sum_i\E [Q_{i,x}]\E [T_{i,x}]+\E [L_x].
\end{equation}
Let us phrase this same statement in a slightly different way. Suppose there are ${\sf Q}$ steps at which the randomized algorithm has a nonzero probability of making a subroutine call\footnote{We assume that this is finite -- this need not be the case, even when the expected running time is finite, but we can always truncate the algorithm to run in time at most its expected running time up to a constant, introducing only bounded error. This follows from Markov's inequality, and is standard.}. For any $i$ and any $t\in\{1,\dots,{\sf Q}\}$, let $p_{i,t}(x)$ be the probability that the subroutine is called with input $i$ at the $t$-th step (only counting those steps at which there is non-zero probability of making a subroutine call). Then we can rewrite \eqref{eq:straight-line} as:
\begin{equation*}
\sum_i\sum_{t=1}^{\sf Q}p_{i,t}(x)\E [T_{i,x}]+\E [L_x] = {\sf Q}\sum_i\bar{p}_i(x)\E [T_{i,x}]+\E [L],
\end{equation*}
where $\bar{p}_i(x)=\frac{1}{\sf Q}\sum_{t=1}^{\sf Q}p_{i,t}(x)$ is the average probability of querying the subroutine on input $i$ (averaged over all steps at which subroutines are called with non-zero probability).

Something similar happens in a \emph{quantum} algorithm that makes calls to an oracle 
$$U_f=\sum_{i\in [N]}\ket{i}\bra{i}_I\otimes U_f^{(i)}+\ket{0}\bra{0}_I\otimes I$$ 
implemented by a variable-time subroutine, as long as the algorithm and subroutines are combined in a careful way~\cite{jeffery2023subroutines}. Consider a quantum algorithm that makes ${\sf Q}$ queries to $U_f$, and ${\sf L}$ additional basic operations. Let $\ket{\psi_t(x)}$ be the state of the algorithm just before the $t$-th oracle query (this will generally depend on the input $x$ to the algorithm) and let $\Pi_i$ be the projector onto $\ket{i}$ in the input register $I$. Define:
$$q_{i,t}(x)=\norm{\Pi_i\ket{\psi_t(x)}}^2
\mbox{ and }\bar{q}_i(x)=\frac{1}{\sf Q}\sum_{t=1}^{\sf Q}q_{i,t}(x).$$
We call $\bar{q}_i(x)$ the \emph{average query weight} for $i$. Then we have the following.
\begin{theorem}[\cite{jeffery2023subroutines}]\label{thm:composition-formal}
Fix a quantum algorithm that computes $g:D\rightarrow\{0,1\}$ with bounded error using calls to an oracle $U_f=\sum_i\ket{i}\bra{i}\otimes U_f^{(i)}+\ket{0}\bra{0}\otimes I$; and a variable-time quantum subroutine instantiating the oracle, as described above. 
Let ${\sf T}_{\mathrm{avg}}$ be a known value such that
$$\sum_i\bar{q}_i(x)\E [T_{i,x}]\leq {\sf T}_{\mathrm{avg}}$$
for all $x\in D$. Then there is a quantum algorithm that decides $g$ with bounded error in complexity
$$\tO\left({\sf Q}\cdot {\sf T}_{\mathrm{avg}}+{\sf L}\right).$$
\end{theorem}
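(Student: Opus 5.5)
We plan to prove \Cref{thm:composition-formal} in the quantum-walk (span-program-like) framework of \cite{jeffery2023subroutines}. The idea is not to run the unitaries in sequence. Instead, we encode the whole composed computation as a weighted graph whose edges carry the basic operations. We then decide $g$ by phase estimation of a walk operator that distinguishes positive from negative witnesses. The complexity is governed by the product of the positive-witness size and the negative-witness size.

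\textbf{Step 1 (outer algorithm as a line graph).} The outer algorithm $U_1,\dots,U_{\sf L+Q}$ becomes a path of vertices $v_0,\dots,v_{\sf L+Q}$, with each edge labelled by the corresponding unitary on $H_A\otimes H_Z$. Following the straight-line construction, we attach an input edge at $v_0$ and an output edge at the end that checks the answer register. The flow of the history state $\sum_t \ket{t}\ket{\psi_t(x)}$ along this line is the positive witness, of energy $O({\sf L}+{\sf Q})$. Symmetrically, the ``negative'' witness is the backwards-propagated state.

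\textbf{Step 2 (branching at query steps).} At each query step we replace the single edge by a branching gadget. Controlled on $\ket{i}$ in the input register, the line splits into a subpath of length $T_i$, whose edges apply the variable-time subroutine's unitaries $U^i_1,\dots,U^i_{T_i}$, and then rejoins. The variable-time structure ($H_t$, $\Pi_{\le t}$) lets us terminate the part of the state in $\overline H_t$ early. It exits the branch after $t$ steps, so the weight on branch $i$ scales like $\E[T_{i,x}]$ rather than $\max T$. The witness energy contributed at query $t$ is then $\sum_i q_{i,t}(x)\,\E[T_{i,x}]$.

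\textbf{Step 3 (balancing and summing).} We reweight edges so that the subroutine paths have weight scaled by a parameter chosen using the known ${\sf T}_{\mathrm{avg}}$. Summing over $t$ gives positive witness size $O({\sf Q}{\sf T}_{\mathrm{avg}}+{\sf L})$, normalized by a constant negative witness size, or the reverse with rebalancing. The product then yields complexity $\tO({\sf Q}{\sf T}_{\mathrm{avg}}+{\sf L})$ via the standard phase-estimation lemma for such graph algorithms. The bounded error of the outer algorithm and the subroutine is handled by approximate witnesses, and the logs come from error reduction and precision.

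\textbf{Main obstacle.} The hardest part is Step 2, together with the need for the bound to hold for all $x$ using only average weights. Edges at different query times share nothing, so there is no interference, and each query's cost must be bounded separately. We must verify that the early-termination edges produce a valid flow, with conservation at the rejoin vertices, whose energy is controlled by $\E[T_{i,x}]$. We must also check that the walk operator is implementable in unit cost under the model of \Cref{sec:VT-model}. The first thing to try is the telescoping identity $\sum_t \|\Pi_{>t}\cdots\|^2=\E[T]$.
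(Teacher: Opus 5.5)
\paragraph{Overall assessment.} Your overall route is the one of \cite{jeffery2023subroutines}, which the paper cites for this theorem rather than reproving: turn the outer algorithm into a phase estimation algorithm, use history states as witnesses, and splice the subroutine in via inner transition spaces. Its ingredients are \Cref{thm:phase-est-fwk}, the spaces $\Psi^{i,\to}_t,\Psi^{i,\leftarrow}_t,\Psi^{i,\leftrightarrow}_t$, and \Cref{lem:inner-transition-witnesses}. As you wrote them, though, two steps do not work.

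\paragraph{Step~3: the witness accounting is wrong.} The cost is $\sqrt{W_+W_-}$, not the product. In the form of \Cref{thm:phase-est-fwk}, once the positive witness is scaled to constant size, the cost is $\sqrt{\cC_-}$ reflections. With your numbers (positive size $O({\sf Q}{\sf T}_{\mathrm{avg}}+{\sf L})$, negative size $O(1)$) you would get $\sqrt{{\sf Q}{\sf T}_{\mathrm{avg}}+{\sf L}}$. That is a generic quadratic speedup: with unit-cost queries it would give the parity of $N$ bits from $O(\sqrt{N})$ queries.

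The missing point is that the negative witness is not constant. For $g(x)=0$ it is the alternating-sign history state of the outer algorithm. In it, each query step contributes terms $\Pi_i\ket{\psi_t(x)}\otimes\ket{w_-(i)}$ with $\norm{\ket{w_-(i)}}^2=2\E[\sum_{s\le T_{i,x}}\alpha_s]$. So $W_-$ has the same order as $W_+$, namely ${\sf L}+{\sf Q}+{\sf Q}\sum_i\bar q_i(x)\E[T_{i,x}]$. This is exactly why the hypothesis must hold for every $x\in D$, negative instances included.

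Reweighting subroutine edges by $\lambda$ relative to outer edges cannot help either, since $({\sf L}+{\sf Q}{\sf T}_{\mathrm{avg}}/\lambda)({\sf L}+\lambda{\sf Q}{\sf T}_{\mathrm{avg}})\geq({\sf L}+{\sf Q}{\sf T}_{\mathrm{avg}})^2$. The known ${\sf T}_{\mathrm{avg}}$ is used instead to set the initial-state weight (as $\omega$ in \Cref{thm:simple-picture}) and the phase estimation precision. This gives $\cC_-=O(({\sf L}+{\sf Q}{\sf T}_{\mathrm{avg}})^2)$.

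\paragraph{Step~2: the gadget cannot ``rejoin'' the line.} After $U^i_1,\dots,U^i_t$ have been applied, the subroutine registers hold the answer plus garbage. So what reaches the next outer vertex is not $U_f\ket{\psi_t(x)}$ with a clean workspace, and the outer history state stops being a valid witness. You need compute, apply, uncompute:
\begin{itemize}
\item a forward path;
\item a turn-around at the stopping step, given by the spaces $\Psi^{i,\leftrightarrow}_t$, supported on $z\in{\cal Z}_t$, which apply the answer via $b\mapsto b\oplus a$;
\item a backward path that uncomputes.
\end{itemize}
Early termination is implemented by this turn-around, not by exiting to the main line. Your telescoping identity then gives $\norm{\ket{w_\pm(i)}}^2=2(\E[T_i]+1)$ for $\alpha_t\equiv 1$, with the factor $2$ coming from the two directions.
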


This result was improved in~\cite{belovs2024Taming} to apply to a broader class of oracles, although~\cite{belovs2024Taming} does not explicitly consider algorithms whose running time is a random variable. In any case, the above theorem is sufficient for the settings of this paper. In this paper, we assume subroutines have no error. Such results can be extended to bounded-error subroutines at the cost of a log-factor overhead, using amplitude amplification, but using the techniques of~\cite{belovs2024Taming,belovs2025purification}, these log factors can be removed. 

\section{Grover's Algorithm and Subroutine Composition}\label{sec:grover}

Grover's algorithm~\cite{grover1996QSearch}, one of the most fundamental of all quantum algorithms, can be used to quadratically speed up a classical (randomized) brute force search approach to black-box search, leading to a worst-case bounded-error\footnote{In fact, Grover's algorithm even achieves \emph{one-sided error}.} quantum query complexity of $O(\sqrt{N})$, with the time complexity (counting both queries and elementary operations) worse by only a log factor. If, instead of the oracle $U_f$ (see \Cref{sec:black-box}) we are given a variable-time subroutine implementing $U_f$, as described in \Cref{sec:VT-model}, then it is not difficult to see that by running Grover's algorithm on this instantiation of the oracle, we can solve the search problem in complexity $\tO(\sqrt{N}\max_{i\in [N]}T_i)$. In this section, we investigate the complexity of an algorithm derived from instead composing Grover's algorithm with the variable-time subroutine using \Cref{thm:composition-formal}. 

For completeness, we start by describing and analysing Grover's algorithm in \Cref{sec:grover-alg}. Then we apply \Cref{thm:composition-formal} to Grover's algorithm in \Cref{sec:grover-alg-comp}. For simplicity, we restrict our attention to the setting of a unique marked vertex, as this is already enough to illustrate our point.

\subsection{Grover's algorithm}\label{sec:grover-alg}

Grover's search algorithm aims to decide the existence of a marked item among $[N]$, where the property of being ``marked'' is determined by an oracle $f:[N]\rightarrow\{0,1\}$. 

The initial state is the uniform superposition over $[N]$:
\begin{equation*}
    \ket{\pi} \colonequals \frac{1}{\sqrt{N}}\sum_{i\in [N]}\ket{i},
\end{equation*}
which can be generated by applying a Hadamard gate to each of $n=\log N$ fresh qubits $\ket{0}$. 
The algorithm then consists of alternating two reflections: 
the oracle $U_f$, which is the input to the black-box search problem; and the diffusion operator $U_{\pi}$. The first is defined in \Cref{sec:black-box} as:
\begin{align*}
    U_f &= 2\sum_{i\in [N]: f(i) = 0}\ket{i}\bra{i} - I 
\end{align*}
and we define the second:
\begin{align*}
    U_{\pi} &= 2\ket{\pi}\bra{\pi} - I = H^{\otimes n}\big(2\ket{0^n}\bra{0^n}-I\big)H^{\otimes n}.
\end{align*}
We recall that 
\begin{equation}\label{eq:oracle}
    U_f\ket{i} = \begin{cases}
        \phantom{-}\ket{i}, \hspace{0.2cm} \text{if   } f(i) = 0 \\
         -\ket{i} ,\hspace{0.2cm}  \text{if   } f(i) = 1
    \end{cases}
\end{equation}
and the action of $U_\pi$ can be deduced:
\begin{align*}
    U_\pi \ket{\pi} &= \ket{\pi} \\
    U_\pi \ket{\psi} &= -\ket{\psi} \text{ if } \braket{\pi}{\psi} = 0.
\end{align*}
The algorithm will apply $U_\pi U_f$ some number $\sf T$ of times, to get the state $(U_\pi U_f)^{\sf T}\ket{\pi}$, and then measure to get some $i\in [N]$. If $f(i)=1$, it will output $1$ (there exists a marked element) and otherwise it will output $0$. If $\sf T$ is chosen appropriately, this will give a one-sided error algorithm for black-box search.

\paragraph{Case 0: No marked item.} We first make a somewhat simple observation. If there is no marked element, then $U_f=I$, and so every application of $U_\pi U_f$ acts on $\ket{\pi}$ as the identity. Thus, for any choice of $\sf T$, $(U_\pi U_f)^{\sf T}\ket{\pi}=\ket{\pi}$, and so upon measuring this state, we will get a uniform random $i\in [N]$, and of course, we will have $f(i)=0$, since no $i$ is marked, and so we will output 0. 

\paragraph{Case 1: Marked item.} Suppose there is a unique $m\in [N]$ such that $f(m)=1$. We will argue that applying the rotation $U_\pi U_f$ moves the state towards a good state defined as: \begin{equation} \label{eq:good_state}
    \ket{\mathrm{good}} = \ket{m}.
\end{equation}
We can define the state corresponding to the superposition of all the unmarked items as: \begin{equation}\label{eq:bad_state}
    \ket{\mathrm{bad}} = \sum_{i\neq m} \frac{1}{\sqrt{N-1}}\ket{i}.
\end{equation}
Let $a$ be the unique angle in $(0,\pi/2)$ such that $\sin a=\frac{1}{\sqrt{N}}$. Then $\cos a = \sqrt{\frac{N-1}{N}}$, and so
\begin{align}
    \ket{\pi} &= \sin{a}\ket{\mathrm{good}} + \cos{a}\ket{\mathrm{bad}}.\label{eq:pi-a}
\end{align}
That is, $a$ is the angle between $\ket{\mathrm{bad}}$ and the initial state $\ket{\pi}$. Then we have the following.
\begin{lemma}\label{lem:phi-a}
    For any angle $\theta \in \big[0, \frac{\pi}{2}\big]$, define a corresponding state $\ket{\varphi_\theta} = \sin{\theta}\ket{\mathrm{good}} + \cos{\theta}\ket{\mathrm{bad}}$. Then $\ket{\varphi_a} = \ket{\pi}$, and for any $\theta\in [0,\pi/2]$, $U_\pi U_f \ket{\varphi_\theta} = \ket{\varphi_{\theta+2a}}$.
\end{lemma}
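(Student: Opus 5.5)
The plan is to work entirely in the two-dimensional real subspace $\spann{\ket{\mathrm{good}},\ket{\mathrm{bad}}}$. Since $\ket{\mathrm{good}}=\ket{m}$ and $\ket{\mathrm{bad}}$ is supported on $[N]\setminus\{m\}$, these two vectors are orthonormal. The identity $\ket{\varphi_a}=\ket{\pi}$ is then immediate from \eqref{eq:pi-a}. It remains to compute the action of each reflection on this plane and compose them.

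First, the oracle. By \eqref{eq:oracle}, $U_f\ket{\mathrm{good}}=-\ket{\mathrm{good}}$. Since every $i\neq m$ is unmarked, $U_f\ket{\mathrm{bad}}=\ket{\mathrm{bad}}$. Hence
$$U_f\ket{\varphi_\theta}=-\sin\theta\ket{\mathrm{good}}+\cos\theta\ket{\mathrm{bad}}=\ket{\varphi_{-\theta}},$$
where we extend the definition of $\ket{\varphi_\theta}$ to all real $\theta$ by the same formula. Geometrically, $U_f$ is the reflection across $\ket{\mathrm{bad}}=\ket{\varphi_0}$.

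Second, the diffusion operator. $U_\pi=2\ket{\pi}\bra{\pi}-I$, and $\ket{\pi}=\ket{\varphi_a}$ lies in the plane, so $U_\pi$ preserves the plane and acts on it as the reflection across the line through $\ket{\varphi_a}$. Concretely, using $\braket{\varphi_a}{\varphi_\phi}=\cos a\cos\phi+\sin a\sin\phi=\cos(\phi-a)$, we get
$$U_\pi\ket{\varphi_\phi}=2\cos(\phi-a)\ket{\varphi_a}-\ket{\varphi_\phi}.$$
By the double-angle identities $2\cos(\phi-a)\sin a-\sin\phi=\sin(2a-\phi)$ and $2\cos(\phi-a)\cos a-\cos\phi=\cos(2a-\phi)$, this equals $\ket{\varphi_{2a-\phi}}$.

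Composing the two steps with $\phi=-\theta$ gives $U_\pi U_f\ket{\varphi_\theta}=\ket{\varphi_{2a+\theta}}$, as claimed.

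The only step needing any care is verifying the two trigonometric identities. Each follows from the product-to-sum formulas: for example, $2\cos(\phi-a)\sin a=\sin\phi+\sin(2a-\phi)$. One bookkeeping point: the statement restricts $\theta$ to $[0,\pi/2]$, while $\theta+2a$ may exceed $\pi/2$. Since the formula for $\ket{\varphi_\theta}$ makes sense for every real angle, I will note that the identity holds for all real $\theta$.
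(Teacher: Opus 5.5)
Your proof is correct and follows essentially the same route as the paper: show $U_f\ket{\varphi_\theta}=\ket{\varphi_{-\theta}}$ and $U_\pi\ket{\varphi_\phi}=\ket{\varphi_{2a-\phi}}$ (via the overlaps with $\ket{\pi}$ and standard trigonometric identities), then compose the two. Your explicit extension of $\ket{\varphi_\theta}$ to all real $\theta$ is a useful piece of care, since the paper's proof silently uses $\ket{\varphi_{-\theta}}$ and $\ket{\varphi_{\theta+2a}}$ outside $[0,\pi/2]$.
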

\begin{proof}
The first claim, $\ket{\varphi_a}=\ket{\pi}$, is clear from \eqref{eq:pi-a}.
For the remainder of the claim, we will prove that $U_f \ket{\varphi_\theta} = \ket{\varphi_{-\theta}}$ (Claim 1) and $U_\pi \ket{\varphi_\theta} = \ket{\varphi_{2a-\theta}}$ (Claim 2), from which the result follows.
\begin{description}
    \item[Claim 1:] For any $i\in [N]$, $U_f\ket{i} = (-1)^{f(i)}\ket{i}$. We compute:
        \begin{align*}
        U_f \ket{\varphi_\theta} &= \sin{\theta}\hspace{0.2cm} U_f\ket{\mathrm{good}} + \cos{\theta }\hspace{0.2cm}U_f\ket{\mathrm{bad}}\\
        &=\sin{\theta}\hspace{0.2cm} U_f\ket{m} + \frac{1}{\sqrt{N-1}}\cos{\theta }\hspace{0.2cm}\sum_{i \neq m} U_f\ket{i}\\
        & = -\sin{\theta}\hspace{0.2cm}\ket{m} + \frac{1}{\sqrt{N-1}}\cos{\theta }\hspace{0.2cm}\sum_{i \neq m}\ket{i}\\
        &= \sin{(-\theta)}\hspace{0.2cm}\ket{\mathrm{good}}+\cos{(-\theta) }\ket{\mathrm{bad}} = \ket{\varphi_{-\theta}},
    \end{align*}
    where we used $-\sin(\theta)=\sin(-\theta)$ and $\cos(\theta)=\cos(-\theta)$ for all $\theta$. 
    \item[Claim 2:] For the second claim, we recall that $U_\pi=2\ket{\pi}\bra{\pi}-I$, from which we can compute:
    \begin{align*}
        U_\pi \ket{\varphi_\theta} &= (2\ket{\pi}\bra{\pi} - I)(\sin{\theta}\ket{\mathrm{good}} + \cos{\theta}\ket{\mathrm{bad}})\\
        & = \sin{\theta} (2\ket{\pi}\bra{\pi} - I)\ket{\mathrm{good}} + \cos{\theta}(2\ket{\pi}\bra{\pi} - I)\ket{\mathrm{bad}}\\
        &= \sin{\theta} (2\ket{\pi}\braket{\pi}{\mathrm{good}} - \ket{\mathrm{good}}) + \cos{\theta}(2\ket{\pi}\braket{\pi}{\mathrm{bad}} - \ket{\mathrm{bad}})\\
        & = \sin{\theta} (2\sin{a}\ket{\pi} - \ket{\mathrm{good}}) + \cos{\theta} (2\cos{a}\ket{\pi}-\ket{\mathrm{bad}})\\
        & = \begin{multlined}[t][0.7\displaywidth]
            2\sin{\theta}\sin{a} (\sin{a}\ket{\mathrm{good}}+\cos{a}\ket{\mathrm{bad}}) - \sin{\theta}\ket{\mathrm{good}}\\
            +2\cos{\theta}\cos{a} (\sin{a}\ket{\mathrm{good}}+\cos{a}\ket{\mathrm{bad}})-\cos{\theta}\ket{\mathrm{bad}}
        \end{multlined}\\
        &= \begin{multlined}[t][0.7\displaywidth](2\sin{\theta}\sin^2{a} + 2\cos{\theta}\cos{a}\sin{a}-\sin{\theta})\ket{\mathrm{good}}+ (2\cos{\theta}\cos^2{a}\\
        + 2\sin{\theta}\cos{a}\sin{a} -\cos{\theta})\ket{\mathrm{bad}}
        \end{multlined}\\
        &= (-\sin{\theta}\cos{2a} + \cos{\theta}\sin{2a})\ket{\mathrm{good}} + (\sin{\theta}\sin{2a} + \cos{\theta}\cos{2a})\ket{\mathrm{bad}}\\
        & = \sin{(2a-\theta)}\ket{\mathrm{good}} + \cos{(2a-\theta)}\ket{\mathrm{bad}} = \ket{\varphi_{2a-\theta}}.
    \end{align*}
\end{description}    
Combining these two calculations, we can deduce the following:
\begin{equation*}
    U_\pi U_f \ket{\varphi_\theta} = U_\pi\ket{\varphi_{-\theta}} = \ket{\varphi_{2a -(-\theta)}} = \ket{\varphi_{\theta+2a}}. \qedhere 
\end{equation*}
\end{proof}

We have all the tools to compute an appropriate value for ${\sf T}$. 
After ${\sf T}$ iterations, we have: 
\begin{align}
    (U_\pi U_f)^{\sf T} \ket{\pi}=
    (U_\pi U_f)^{\sf T} \ket{\varphi_a} &=\ket{\varphi_{a + 2{\sf T}a}} = \ket{\varphi_{(2{\sf T}+1)a}} \nonumber\\
    &= \sin{((2{\sf T}+1)a)}\ket{\mathrm{good}} + \cos{((2{\sf T}+1)a)}\ket{\mathrm{bad}}. \label{eq:final_state}
\end{align}

To minimize the distance from the $\ket{\mathrm{good}}$ state, its corresponding amplitude should be close to 1. In other words, we want $\sin{(2{\sf T}+1)a}\approx 1$ which can be achieved by ensuring $
    (2{\sf T}+1)a \approx \frac{\pi}{2}$. 
As $\sin{a}=\frac{1}{\sqrt{N}}$, for large $N$, $a \approx \frac{1}{\sqrt{N}}$. Thus, we take ${\sf T}=\lfloor\frac{\pi}{4}\sqrt{N}\rceil$. 
By the Maclaurin series expansion of $\sin a=\frac{1}{\sqrt{N}}$, we have:
\begin{align*}
    \frac{1}{\sqrt{N}}=\sin a\leq & a \leq \sin a + \frac{a^3}{3!} = {\frac{1}{\sqrt{N}}} + o\left(\frac{1}{N}\right) \\
    \frac{2{\sf T}+1}{\sqrt{N}}\leq &(2{\sf T}+1)a \leq \frac{2{\sf T}+1}{\sqrt{N}} +o\left(\frac{1}{\sqrt{N}}\right)\\
    \frac{2(\frac{\pi}{4}\sqrt{N}-\frac{1}{2})+1}{\sqrt{N}}\leq &(2{\sf T}+1)a \leq \frac{2(\frac{\pi}{4}\sqrt{N}+\frac{1}{2})+1}{\sqrt{N}}+ o\left(\frac{1}{\sqrt{N}}\right)\\
    \frac{\frac{\pi}{2}\sqrt{N}}{\sqrt{N}}\leq &(2{\sf T}+1)a \leq \frac{\frac{\pi}{2}\sqrt{N}+2}{\sqrt{N}}+ o\left(\frac{1}{\sqrt{N}}\right)\\
    \frac{\pi}{2}\leq &(2{\sf T}+1)a \leq \frac{\pi}{2}+ O\left(\frac{1}{\sqrt{N}}\right).
\end{align*}
Then we have:
\begin{align*}
    |\bra{m}(U_\pi U_f)^{\sf T}\ket{\pi}|^2 
    &=\sin^2((2{\sf T}+1)a)\\
    &\geq 1-\sin^2\left(\frac{\pi}{2}-(2{\sf T}+1)a\right)\geq 1-\left(\frac{\pi}{2}-(2{\sf T}+1)a\right)^2\geq
    1-O\left(\frac{1}{N}\right).
\end{align*}
Thus, with this choice of ${\sf T}$, we will measure $m$, and therefore output 1, with probability very close to~1. 

\subsection{Grover's algorithm with composition}\label{sec:grover-alg-comp}

We now apply \Cref{thm:composition-formal} to Grover's algorithm, and a variable-time subroutine instantiating the oracle $U_f:\ket{i} \mapsto (-1)^{f(i)}\ket{i}$. To get an upper bound on the resulting quantum algorithm for variable-time search, we need to upper bound three quantities. First, ${\sf Q}$ is the number of queries made by the outer algorithm, so in the case of Grover's algorithm, we have ${\sf Q}=O(\sqrt{N})$. Second, ${\sf L}$ is the number of basic operations outside of queries. In the case of Grover's algorithm, we need to count the $O(\sqrt{N})$ calls to the non-query reflection, $U_\pi$, each of which can be implemented in $O(\log N)$ basic operations, so ${\sf L}=O(\sqrt{N}\log N)$. Finally, and most complicated, we need to upper bound ${\sf T}_{\mathrm{avg}}$.

\begin{lemma}\label{lem:weighted-avg}
    Consider Grover's algorithm, where the input contains a single marked element $m$, as described in \Cref{sec:grover-alg}, with the oracle instantiated by a variable-time quantum subroutine. Then $\sum_{i\in [N]}\bar{q}_i \E[T_i]=\Theta\left(\sum_{i\in [N]\setminus\{m\}}\frac{1}{N-1}\E[T_i]+\E[T_m] \right)$, where $\E[T_i]$ is the expected complexity of the subroutine on input $i$, and $\bar{q}_i$ is the average query weight, defined in  \Cref{sec:straight-line-comp}. 
\end{lemma}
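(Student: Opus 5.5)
We compute the query weights $q_{i,t}$ directly from \Cref{lem:phi-a}, average them over the ${\sf Q}={\sf T}$ queries, and compare the two resulting groups of terms. Before the $t$-th query (for $t=1,\dots,{\sf T}$), the state of Grover's algorithm is $(U_\pi U_f)^{t-1}\ket{\pi}=\ket{\varphi_{(2t-1)a}}$. Hence
$$q_{m,t}=\sin^2((2t-1)a),\qquad q_{i,t}=\frac{\cos^2((2t-1)a)}{N-1}\ \text{ for } i\neq m.$$
Since the $\ket{\mathrm{bad}}$ component is uniform over the unmarked elements, every $i\neq m$ has the same average query weight. Writing $\bar s=\frac{1}{\sf T}\sum_{t=1}^{\sf T}\sin^2((2t-1)a)$, we get $\bar q_m=\bar s$ and $\bar q_i=\frac{1-\bar s}{N-1}$ for $i\neq m$. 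Therefore
$$\sum_i \bar q_i\E[T_i]=(1-\bar s)\sum_{i\neq m}\frac{1}{N-1}\E[T_i]+\bar s\,\E[T_m].$$
The lemma then follows once we show that $\bar s$ and $1-\bar s$ are both bounded between positive absolute constants. This holds since all the coefficients are nonnegative.

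To bound $\bar s$, we use that the angles $(2t-1)a$ for $t=1,\dots,{\sf T}$ are equally spaced with step $2a$ and sweep the interval $[a,(2{\sf T}-1)a]$. By the estimate in \Cref{sec:grover-alg}, this interval is $[\,\Theta(1/\sqrt N),\ \pi/2-\Theta(1/\sqrt N)\,]$.

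The average is a Riemann sum for $\frac{2}{\pi}\int_0^{\pi/2}\sin^2\theta\,d\theta=\frac12$. Since $\sin^2$ is $1$-Lipschitz and the step $2a=O(1/\sqrt N)$, the sum equals $\frac12+O(1/\sqrt N)$. We can also verify this exactly via $\sin^2x=\frac{1-\cos 2x}{2}$ and the closed form for $\sum_t\cos((4t-2)a)=\frac{\sin(4{\sf T}a)}{2\sin(2a)}$. Since $4{\sf T}a=\pi-O(1/\sqrt N)$, this sum is $O(1)$, which is $o({\sf T})$. Either way, $\bar s\to\frac12$, so for large $N$ both $\bar s$ and $1-\bar s$ lie in, say, $[1/4,3/4]$. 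Small $N$ is absorbed into the constants, since then ${\sf T}\ge1$ and the angles are bounded away from $0$ and $\pi/2$ by fixed amounts.

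The main obstacle is only the bookkeeping at this step: getting the indexing of the query times right (the state before query $t$ is the one after $t-1$ iterations), and making the Riemann-sum or trigonometric-sum estimate uniform so that the constants do not depend on $N$. The structural part — that the weight splits into a $\ket{\mathrm{good}}$ share on $m$ and a uniform $\ket{\mathrm{bad}}$ share on the rest — is immediate from \Cref{lem:phi-a}.
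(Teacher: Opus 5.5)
Your proposal is correct and follows essentially the paper's route: compute $q_{i,t}$ from \Cref{lem:phi-a}, average over the ${\sf T}$ queries, apply $\sin^2 x=\frac{1-\cos 2x}{2}$, and bound $\sum_t\cos((4t-2)a)$ in closed form. Your coefficients $\bar s$ and $1-\bar s$ are exactly $\frac12\mp\frac{1}{2{\sf T}}\sum_t\cos((4t-2)a)$, which is the paper's \eqref{eq:L-f-A}, and your estimate $\sum_t\cos((4t-2)a)=\frac{\sin(4{\sf T}a)}{2\sin(2a)}=O(1)$ is sharper than the paper's $O(\sqrt N)$ bound. That $O(\sqrt N)$ bound by itself only shows the coefficients differ from $\frac12$ by an $O(1)$ amount, whose constant must still be checked to be below $\frac12$, whereas your estimate makes the $\Theta$ conclusion immediate.
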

It follows that if we want a value ${\sf T}_{\mathrm{avg}}$ that upper bounds $\sum_i\bar{q}_i\E[T_i]$ for \emph{all} $m\in [N]$, it is necessary and sufficient to take
$${\sf T}_{\mathrm{avg}}=\Theta\left(\sum_{i\in [N]}\frac{1}{N}\E[T_i]+\max_{m\in [N]}\E[T_m]\right)=\Theta\left(\max_{i\in [N]}\E[T_i]\right).$$ 
Note that we could not get an asymptotically better bound here, since the upper bound ${\sf T}_{\mathrm{avg}}$ must hold for \emph{any} input; the marked vertex could be anything. We discuss this further in \Cref{sec:VT-bounds}.
Then by \Cref{thm:composition-formal}, we get an upper bound of 
\begin{equation}\label{eq:straight-line-2}
\widetilde{O}({\sf Q}\cdot{\sf T}_{\mathrm{avg}}+{\sf L}) = \widetilde{O}\left(\sqrt{N}\max_{i\in [N]}\E[T_i]\right)
\end{equation}
on variable-time search, which is no better than the naive upper bound of \eqref{eq:naive-VT}.

\begin{proof}[Proof of \Cref{lem:weighted-avg}]
Fix any oracle $f$ such that there is a unique marked $m\in [N]$.
As in \Cref{sec:straight-line-comp}, for any $t\in [1,\dots,\frac{\pi}{4}\sqrt{N}]$, let $\ket{\psi_{t}}$ be the state right before the $t^{\mathrm{th}}$ call to $U_f$ in Grover's algorithm (which depends on $f$). Then: 
\begin{align*}
    \ket{\psi_{t}} &= (U_\pi U_f)^{t-1}\ket{\pi} = \ket{\varphi_{a+2(t-1)a}} = \ket{\varphi_{(2t-1)a}} 
    = \sin((2t-1)a)\ket{\mathrm{good}}+\cos((2t-1)a)\ket{\mathrm{bad}}
\end{align*}
by \Cref{lem:phi-a}. Thus, for any $i\in [N]$: 
\begin{align*}
    q_{i,t}&=\norm{\Pi_i\ket{\psi_t}}^2 =
    \left\{\begin{array}{ll}
     \sin^2((2t-1)a)|\braket{i}{\mathrm{good}}|^2 & \mbox{if }f(i)=1\\
    \cos^2((2t-1)a)|\braket{i}{\mathrm{bad}}|^2 & \mbox{if }f(i)=0.
\end{array}\right.
\end{align*}
Thus, if $i=m$ is the unique marked element, then:
$$q_{m,t}=\sin^2((2t-1)a),$$
and if $i\in [N]\setminus\{m\}$,
$$q_{i,t}=\frac{\cos^2((2t-1)a)}{\sqrt{N-1}},$$
by \eqref{eq:bad_state}. Thus, we can compute the average query weight, first, for $i=m$:
\begin{align*}
    \bar{q}_m=\frac{1}{\sf Q}\sum_{t=1}^{\sf Q}q_{m,t}
    =\frac{4}{\pi\sqrt{N}}\sum_{t=1}^{\frac{\pi}{4}\sqrt{N}}\sin^2((2t-1)a)
\end{align*}
and then for $i\in [N]\setminus\{m\}$:
\begin{align*}
    \bar{q}_i=\frac{1}{\sf Q}\sum_{t=1}^{\sf Q}q_{i,t}
    =\frac{4}{\pi\sqrt{N}}\sum_{t=1}^{\frac{\pi}{4}\sqrt{N}}\frac{\cos^2((2t-1)a)}{N-1}.
\end{align*}
Then we can upper bound the average complexity of a subroutine call, as follows:
\begin{align*}
    L_f:=\sum_{i\in [N]}\bar{q}_i\E[T_i]
     &= \frac{4}{\pi\sqrt{N}}\left(\sum_{i\in [N]: i\neq m}\sum_{t=0}^{\frac{\pi}{4}\sqrt{N}-1}\frac{\cos^2{[(2t+1)a]}}{N-1} \E[T_i] + \sum_{t=0}^{\frac{\pi}{4}\sqrt{N}-1}\sin^2{[(2t+1)a]}\E[T_m]\right).
\end{align*}
Since $\forall x \in \mathbb{R}$, $\cos^2{x} = \frac{1+\cos{2x}}{2}$ and $\sin^2{x} = \frac{1-\cos{2x}}{2}$, we can continue:
\begin{align*}
    \frac{\pi}{4}\sqrt{N}L_f &= \sum_{i\in [N]: i\neq m} \frac{\E[T_i]}{N-1}\sum_{t=0}^{\frac{\pi}{4}\sqrt{N}-1}\frac{1+\cos{[(4t+2)a]}}{2} + \E[T_m]\sum_{t=0}^{\frac{\pi}{4}\sqrt{N}-1}\frac{1-\cos{[(4t+2)a]}}{2} \\
    &= \sum_{i\in [N]:i\neq m}\frac{\E[T_i]}{2(N-1)} \Bigg[\frac{\pi}{4}\sqrt{N}+\sum_{t=0}^{\frac{\pi}{4}\sqrt{N}-1}\cos{[(4t+2)a]}\Bigg]\\
    &\qquad\qquad\qquad + \frac{\E[T_m]}{2} \Bigg[\frac{\pi}{4}\sqrt{N}-\sum_{t=0}^{\frac{\pi}{4}\sqrt{N}-1}\cos{[(4t+2)a]}\Bigg]
\end{align*}
from which we have:
\begin{align}\label{eq:L-f-A}
    L_f &= 
    \sum_{i\in [N]:i\neq m}\frac{\E[T_i]}{2(N-1)} + \frac{\E[T_m]}{2}+\frac{2}{\pi\sqrt{N}} \Bigg[\sum_{i\in [N]:i\neq m}\frac{\E[T_i]}{N-1} - {\E[T_m]}\Bigg]\sum_{t=0}^{\frac{\pi}{4}\sqrt{N}-1}\cos{[(4t+2)a]}.
\end{align}
Using the identity $\cos(x+y)=\cos x \cos y - \sin x \sin y$, we can compute:
\begin{align*}
    \sum_{t=0}^{\frac{\pi}{4}\sqrt{N}-1}\cos{[(4t+2)a]}
    &=\sum_{t=0}^{\frac{\pi}{4}\sqrt{N}-1}\big(\cos{4ta}\cos{2a} - \sin{4ta}\sin{2a}\big)\\
    &=\cos{2a}\sum_{t=0}^{\frac{\pi}{4}\sqrt{N}-1}\cos{4ta} - \sin{2a}\sum_{t=0}^{\frac{\pi}{4}\sqrt{N}-1}\sin{4ta}.
\end{align*}
From here, we can use Lagrange's trigonometric identities, which state that for $\theta  \neq 0 \mod 2\pi$:
\begin{equation*}
    \sum_{k=0}^T \sin{k\theta} = \frac{\cos{\frac{1}{2}\theta}-\cos{\big((T+\frac{1}{2})\theta\big)}}
    {2\sin{\frac{1}{2}\theta}},
    \mbox{ and }
    \sum_{k=0}^T \cos{k\theta} = \frac{\sin{\frac{1}{2}\theta}+\sin{\big((T+\frac{1}{2})\theta\big)}}{2\sin{\frac{1}{2}\theta}} 
\end{equation*}
to get:
\begin{equation*}\label{eq:sum-cos}
\begin{split}
    &\sum_{t=0}^{\frac{\pi}{4}\sqrt{N}-1}\cos{[(4t+2)a]}\\
    ={}&\frac{\cos(2a)\left(\sin(2a)+\sin\left(\left({\pi}\sqrt{N}+2\right)a\right)\right) - \sin({2a})\left(\cos({2a})-\cos\left(\left({\pi}\sqrt{N}+2\right)a\right)\right)}{2\sin({2a})}\\
    ={}&\frac{\cos(2a)\sin\left(\left({\pi}\sqrt{N}+2\right)a\right)+ \sin({2a})\cos\left(\left({\pi}\sqrt{N}+2\right)a\right)}{2\sin({2a})}\\
    ={}&\frac{\sin\left((\pi\sqrt{N}+4)a\right)}{2\sin({2a})}
    =\frac{\sin\left(\pi-(\pi\sqrt{N}+4)a\right)}{2\sin({2a})}
    \leq \frac{1}{2}\frac{|\pi-\pi\sqrt{N}a-4a|}{4a/\pi}\\
    &=\frac{\pi}{8}|\pi/a-\pi\sqrt{N}-4|
\end{split}
\end{equation*}
where we used $\sin(x+y)=\cos x\sin y +\sin x\cos y$, followed by $\sin x = \sin(\pi-x)$, and $\sin x\leq |x|$ and $\sin x\geq 2x/\pi$ (whenever $x\in [0,\pi/2]$). Since
$$\frac{1}{\sqrt{N}}=\sin a \leq a \leq \frac{\pi}{2}\sin a=\frac{\pi}{2}\frac{1}{\sqrt{N}},$$
we can upper bound this by $O(\sqrt{N})$, which is sufficient for our purposes. Plugging this into \eqref{eq:L-f-A} gives:
\begin{align*}
    L_f &= 
    \sum_{i\in [N]:i\neq m}\frac{\E[T_i]}{2(N-1)} + \frac{\E[T_m]}{2}+\frac{2}{\pi\sqrt{N}} \Bigg[\sum_{i\in [N]:i\neq m}\frac{\E[T_i]}{N-1} - {\E[T_m]}\Bigg]O(\sqrt{N})\\
    &=\Theta\left(\sum_{i\in [N]\setminus\{m\}}\frac{1}{N-1}\E[T_i]+\E[T_m] \right),
\end{align*}
as claimed. 
\end{proof}

\section{Comparison of Variable-Time Search Upper Bounds}\label{sec:VT-bounds}

In this section, we consider a more general setting for search. We no longer restrict to a single marked element. Instead, we let $P_1$ be \emph{any} subset of all $f:[N]\rightarrow\{0,1\}$ such that there exists $i\in [N]$ with $f(i)=1$, and we suppose the input $f$ is promised to be the all-0s function, or a function from $P_1$. 
Moreover, we let $\pi$ be any distribution on $[N]$ from which we can easily generate a coherent quantum sample, and we let $\eps=\min_{f\in P_1}\sum_{i:f(i)=1}\pi(i)$.  
In this more general setting, the naive upper bound on variable-time search becomes:
\begin{equation}\label{eq:naive-VT-gen}
    \textrm{Naive variable-time search}: \widetilde{O}\left(\frac{1}{\sqrt{\eps}}\max_{i\in [N]}\E[T_i] \right),
\end{equation}
and Ambainis' upper bound~\cite{ambainis2010VTSearch} becomes:
\begin{equation}\label{eq:amb-VT-gen}
    \textrm{$\ell_2$-variable-time search}: \widetilde{O}\left(\frac{1}{\sqrt{\eps}}\sqrt{\sum_{i\in [N]}\pi(i)\E[T_i^2]}\right)
    =\widetilde{O}\left(\sqrt{\frac{\sum_{i\in [N]}\pi(i)\E[T_i^2]}{\min_{f\in P_1}\sum_{i:f(i)=1}\pi(i)}}\right).
\end{equation}
Ref.~\cite{jeffery2023subroutines} gave two alternative upper bounds for variable-time search that are generally incomparable to \eqref{eq:amb-VT-gen}:
\begin{align}
    \textrm{$\ell_1$-variable-time search}:\; & \widetilde{O}\left(\sqrt{\frac{\sum_{i\in [N]}\pi(i)\E[T_i]}{\min_{f\in P_1}\sum_{i:f(i)=1}\frac{\pi(i)}{T_i}}}\right)\label{eq:ell1-VT-gen}\\       \textrm{$\ell_0$-variable-time search}:\; & \widetilde{O}\left(\frac{1}{\sqrt{\min_{f\in P_1}\sum_{i:f(i)=1}\frac{\pi(i)}{\E[T_i^2]}}}\right).\label{eq:ell0-VT-gen}
\end{align}
For the sake of discussion, \Cref{tab:compare} shows the $\ell_b$ upper bounds for each $b\in\{0,1,2\}$, when we restrict to the case where $P_1$ is the set of $f:[N]\rightarrow\{0,1\}$ such that there is a unique marked element $m$ and furthermore, we are promised that $\E[T_m]\leq {\sf T}_{\max}$ for some fixed upper bound ${\sf T}_{\max}$. This is similar to the case we considered in the introduction, and \Cref{sec:grover}, except for the upper bound ${\sf T}_{\max}$. We first note that this setting is not very well motivated. If the values $\E[T_i]$ are known, we can then simply restrict our search to the subset $\{i\in [N]:\E[T_i]\leq {\sf T}_{\max}\}$. If not, then we can still simply stop running any subroutine after $c{\sf T}_{\max}$ steps, since if it were going to find a marked vertex, it would have done so by then with probability at least $1-1/c$, by Markov's inequality. However, for the purpose of discussion, this is a useful setting, since it allows us to better compare our results from \Cref{sec:grover} with other variable-time search upper bounds.

\begin{table}
\centering
\renewcommand{\arraystretch}{1.8}
\begin{tabular}{|c|c|c|}
\hline
    $\ell_2$ & $\ell_1$ & $\ell_0$ \\
    \hline
    $\sqrt{N}\sqrt{\frac{1}{N}\sum_{i\in[N]}\E[T_i^2]}$ & $\sqrt{N}\sqrt{\frac{1}{N}\sum_{i\in[N]}\E[T_i] \cdot {\sf T}_{\max}}$ & $\sqrt{N}{\sf T}_{\max}$\\
    \hline
\end{tabular}
\caption{Comparison of the $\ell_2$, $\ell_1$, and $\ell_0$-variable-time search upper bounds, in the special case where we are promised that there is a unique marked element, and its checking time is at most ${\sf T}_{\max}$.}\label{tab:compare}
\end{table}

In this setting, by \Cref{lem:weighted-avg}, our upper bound on variable-time search using straight-line composition becomes:
\begin{equation*}\label{eq:straight-line-VT}
    \textrm{Straight-line composition variable-time search}: \widetilde{O}\left(\sqrt{N}\left(\frac{1}{N}\sum_{i\in [N]}\E[T_i]+{\sf T}_{\max} \right)\right).
\end{equation*}
This is always worse than the $\ell_1$-variable-time search upper bound in \Cref{tab:compare}, since it uses arithmetic mean of the $\ell_1$-average of the running times and ${\sf T}_{\max}$, which is always at least the geometric mean, by the AM-GM inequality.

\section{Loop Composition}\label{sec:loop-composition}

The straight-line composition, \Cref{thm:composition-formal}, applied to Grover's search algorithm as done in \Cref{sec:grover-alg-comp} treats Grover's algorithm as a straight-line program, with branching in the subroutine calls. However, this ignores a key structural property of Grover's algorithm: that every step is a repetition of a single Grover iterate. As we saw in \Cref{sec:VT-bounds}, this results in a suboptimal bound.
In this section, we exploit this \emph{loop}ing structure.
Rather than composing variable-time subroutines into a straight-line unrolling of Grover's algorithm, we instead compose the subroutine into a more loop-like structure. We call this \emph{loop composition}, and it achieves the following, which matches the upper bound of Ambainis from \eqref{eq:amb-VT}.
\begin{theorem}[Informal]\label{thm:loop-composition-informal}
Fix a variable-time quantum algorithm that computes $f:[N]\rightarrow\{0,1\}$ using $T_i$ basic operations on input $i\in [N]$, for random variable $T_i$. Then there is a quantum algorithm that decides if there exists $i\in [N]$ such that $f(i)=1$ with bounded error in complexity $\tO\left(\sqrt{\sum_{i\in [N]}\E[T_i^2]}\right)$.
\end{theorem}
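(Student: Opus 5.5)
We will prove \Cref{thm:loop-composition-informal} in the quantum-walk (phase-estimation) framework on which \Cref{thm:composition-formal} rests. The plan is to replace the length-$\Theta(\sqrt{N})$ line of Grover iterates by a single \emph{loop}: a star graph whose centre carries the diffusion state $\ket{\pi}$ and whose $i$-th spoke is a path of length $\approx \E[T_i]$ simulating the subroutine on input $i$. The spoke then either terminates (if $f(i)=1$) or returns to the centre. The key difference from straight-line composition is the resulting cost accounting. Each $i$ is visited with weight $1/N$ per pass, with no dependence on the iteration index $t$, so interference between iterations is kept rather than unrolled away. We therefore expect the complexity to scale as $\sqrt{W\cdot R}$, where $W$ is a positive-witness size and $R$ a negative-witness size, as in the $\ell_1$ bound of \cite{jeffery2023subroutines}. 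A weighting of the edges will then turn this into the $\ell_2$ bound.

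The concrete steps are as follows.

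First, define the graph. There is a root vertex $v_0$ joined to $N$ paths. Path $i$ has vertices $(i,0),(i,1),\dots,(i,{\sf T})$, where the edge $(i,t-1)\to(i,t)$ implements $U_t^i$ controlled as in \Cref{sec:VT-model}. The edges of path $i$ carry weight $w_i$, and the edge from $v_0$ to $(i,0)$ carries weight $\pi(i)=1/N$. When the measurement $\Pi_{\le t}$ indicates that the subroutine has halted with answer $1$, we attach a boundary (marked) vertex at that point.

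Second, in the negative case ($f\equiv 0$), exhibit a negative witness. This is the flow that sends unit flow from $v_0$ into every path, each of which returns. Its energy is
$$\sum_i \pi(i)\sum_t \frac{\|\text{state on edge }t\|^2}{w_i},$$
which is about $\sum_i \pi(i)\E[T_i]/w_i$, because the amplitude still on path $i$ at step $t$ equals the probability that the subroutine has not halted before step $t$, and these probabilities sum to $\E[T_i]$.

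Third, in the positive case with marked $m$, exhibit a positive witness. This is a potential function that is $1$ at $v_0$ and decays along path $m$ to $0$ at the marked boundary. Its size is about $w_m\E[T_m]$ plus $1/\pi(m)=N$ for the root edge.

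Fourth, choose the weights $w_i=\E[T_i]$. The negative witness then has size $\sum_i \pi(i)=1$. The positive witness has size $\approx N+\E[T_m]^2\le N+\max\ldots$. That is not yet the bound we want, so instead we rescale uniformly: take $w_i=\E[T_i]\cdot\lambda$, with root weights $\pi(i)/(\E[T_i]\lambda)$.

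Taking $w_i=\E[T_i]/\pi(i)$-type weights, the product of the two witness sizes should become
$$\Big(\sum_i \pi(i)\E[T_i^2]\Big)\cdot\frac{1}{\pi(m)}=\sum_i \E[T_i^2].$$
Here $\E[T_i^2]$ rather than $\E[T_i]^2$ arises because we use second moments of the halting time in the path amplitudes. Applying the phase-estimation lemma from \cite{jeffery2023subroutines} then gives complexity $\tO\big(\sqrt{\sum_i\E[T_i^2]}\big)$, with the extra log factors coming from the unit-cost controlled $U_t^i$ and from phase estimation precision.

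The main obstacle is the fourth step: choosing the edge weights along each spoke so that the negative witness is bounded by $\sum_i\pi(i)\E[T_i^2]$ while the positive witness stays $O(1/\pi(m))$ simultaneously for every $m$. A uniform weight per spoke seems to force either $\max_m \E[T_m]$ to appear (reproducing the bad bound of \Cref{lem:weighted-avg}) or the $\ell_1$ bound. My first attempt will be weights growing linearly with depth along the spoke, $w_{i,t}\propto t$. With this choice a marked path contributes $\sum_t 1/t=O(\log T)$ to the positive witness, while the negative energy becomes $\sum_t t\cdot\Pr[T_i\ge t]\approx \E[T_i^2]/2$. This is exactly the shape needed, up to the log factor hidden in $\tO$. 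The remaining technical check is that the controlled-step unitaries of \Cref{sec:VT-model}, run without intermediate measurement, really realize these star-graph reflections, so that the phase-gap/effective-spectral-gap lemma applies.
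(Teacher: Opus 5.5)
Your final choice (the loop/star graph with one subroutine spoke per $i$ and spoke weights growing linearly in $t$) is exactly the mechanism of case (ii a) in the paper's proof of \Cref{thm:general-picture}, which uses $\alpha_t=t+1$ and uniform $\omega_i=N\log{\sf T}/\mu$. Your two sums are the paper's bounds $\E[\sum_{t\le T_m}\frac{1}{t+1}]=O(\log{\sf T})$ and $\E[\sum_{t\le T_i}(t+1)]=\Theta(\E[T_i^2])$. However, the witnesses you actually describe in Steps 2--3 have their roles reversed, so as written nothing is established. A unit flow out of $v_0$ (a vector orthogonal to $\cA$ and $\cB$ with nonzero overlap with $\ket{\psi_0}$) is by definition a \emph{positive} witness, and none exists when $f\equiv 0$. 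The flow has no exit, and formally $\braket{\psi_0}{w}=\braket{w_\cA}{w}+\braket{w_\cB}{w}=0$ for every $\ket{w}\perp\cA,\cB$. In the paper, the positive witness is the flow-type history state $\ket{w_+(i)}$ (amplitudes $1/\sqrt{\alpha_t}$) routed into marked spokes, of size $1+N\sum_{i\in M_f}\frac{\beta_i}{\omega_i}\big(3+2\E[\sum_{t\le T_i}1/\alpha_t]\big)$. The negative witness is the potential-type decomposition $\ket{\psi_0}=\ket{w_\cA}+\ket{w_\cB}$, built from the alternating-sign history states $\ket{w_-(i)}$ (amplitudes $\sqrt{\alpha_t}$) on everything reachable, of size $1+\frac1N\sum_i\omega_i\big(3+2\E[\sum_{t\le T_i}\alpha_t]\big)$. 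Spoke weights therefore enter the negative side multiplicatively and the positive side inversely. That is the opposite of your Step 2--3 formulas, which also mix conventions ($1/\pi(m)$ on the root edge but $w_m\E[T_m]$ on the spoke). Your last paragraph silently switches to the correct convention; commit to it and write these witnesses down. This also shows that Step 4's explanation is wrong. With weights constant along a spoke, the negative side only sees $\sum_t\Pr[T_i\ge t]=\E[T_i]+1$. So you can reach $\E[T_i]^2$ when the $\E[T_i]$ are known (the paper's case (i a), which is even stronger), but never $\E[T_i^2]$; the second moment arises only from $\sum_{t\le T_i}(t+1)$.

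Two further things must be pinned down. First, ``$w_{i,t}\propto t$'' has to be tied to the root weight. Step $t$ of spoke $i$ needs weight $\frac{\omega}{N}(t+1)$, with the same factor $\frac{\omega}{N}$ as the root edge; in the paper, $\ket{w_-(i)}$ is scaled by $\sqrt{\omega_i/N}$ and $\ket{w_+(i)}$ by $\sqrt{N\beta_i/\omega_i}$. With root weight $1/N$ and unscaled spoke weights $t$, as your Step 1 suggests, the product of witness sizes is about $N\sum_i\E[T_i^2]$. Because \Cref{thm:phase-est-fwk} needs $c_+$ constant, one then sets $\omega=N\log{\sf T}/\mu$. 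This gives $\norm{\ket{w_+}}^2\le 8$ and $\norm{\ket{w_\cA}}^2=O\big(\frac{\log{\sf T}}{\mu}\sum_i\E[T_i^2]\big)$, with $\mu\ge1$. Second, the paper does not attach the sink at the halting states. Each spoke is the forward/$\Psi^{i,\leftrightarrow}_t$/backward gadget of \cite{jeffery2023subroutines}, which copies the answer into $b$ and uncomputes before $\Psi_{\smallblacksquare}$ and $\Psi_{\scalestar}$. Membership and norms of the history states then come straight from \Cref{lem:inner-transition-witnesses}, and the reflections cost $O(\log(N{\sf T}))$ by \Cref{lem:reflect-cost}, which settles your ``remaining technical check''. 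A sink placed directly at halting is plausible for an exact subroutine. But then you must add a dead-end space of the right parity for halted states with answer $0$, and re-verify both witnesses yourself.
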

For a formal version of this statement, see \Cref{thm:general-picture}, where we also recover the different variable-time bounds of~\cite{jeffery2023subroutines} in \Cref{sec:VT-bounds}, and achieve something slightly better (also already known from~\cite{jeffery2023subroutines}) when the values $\E[T_i]$ are known.

In \Cref{sec:phase-estimation}, we review the algorithmic building blocks used to prove \Cref{thm:composition-formal} (straight-line composition) in \cite{jeffery2023subroutines}, as we will use the same ideas in our loop composition.
To build the intuition incrementally, we first analyze loop composition without the composition, i.e. in the simpler setting where the oracle can be queried at a single uniform cost, in \Cref{sec:simple-picture}. Finally, in \Cref{sec:general-picture} we extend the analysis to the general variable-cost setting, where the oracle is implemented by a  subroutine that has cost $T_i$ on input $i$.

\subsection{Phase estimation algorithms}
\label{sec:phase-estimation}

Loosely speaking, phase estimation~\cite{kitaev1996PhaseEst} is a quantum primitive that makes controlled calls to some unitary $U$, and, given as input some eigenstate of $U$, estimates the phase $\theta$ of the corresponding eigenvalue $e^{i\theta}$.
In this section, we review a specific type of phase estimation algorithm, where the unitary is a product of two reflections, and we are only interested in distinguishing a zero phase from non-zero phases. An algorithm of this specific structure can be built from any straight-line quantum algorithm and variable-time subroutines to prove the straight-line composition result described in \Cref{thm:composition-formal}, and we will use a similar technique to prove our loop composition result. 

We borrow the specific formulation of such a phase estimation algorithm from~\cite{jeffery2022kDist} and for the remainder of this paper, we let the phrase ``phase estimation algorithm'' refer to this specific formalism. Refer to \cite[Section~3.1]{jeffery2022kDist} for intuition about why such an algorithm works. 

A phase estimation algorithm is defined by a set of parameters, which we formalize in the following definition.
\begin{definition}[Parameters of a Phase Estimation Algorithm]\label{def:phase-est-alg}
Fix a finite-dimensional complex inner product space $H$, a unit vector $\ket{\psi_0}\in H$, and sets of pairwise orthogonal vectors
$\Psi^{\cal A},\Psi^{\cal B}\subset H$ (a pair of vectors $\ket{\psi_A}\in\Psi^{\cal A}$ and $\ket{\psi_B}\in \Psi^{\cal B}$ need not be orthogonal). We further assume that $\ket{\psi_0}$ is orthogonal to every vector in $\Psi^{\cal B}$.  Let $\Pi_{\cal A}$ be the orthogonal projector onto ${\cal A}=\spaan{\Psi^{\cal A}}$, and similarly for $\Pi_{\cal B}$.
\end{definition}
These parameters define a product of reflections:
$U_{\cal AB}=(2\Pi_{\cal A}-I)(2\Pi_{\cal B}-I)$.
The algorithm defined by $(H,\ket{\psi_0},\Psi^{\cal A},\Psi^{\cal B})$ performs phase estimation of $U_{\cal AB}$ on initial state $\ket{\psi_0}$, to sufficient precision that by measuring the phase register and checking if the output is 0, we can distinguish between a \emph{negative case} and a \emph{positive case}. For analyzing such an algorithm, it turns out to be sufficient to prove that in the positive case, there exists a \emph{positive witness}, and in the negative case, there exists a \emph{negative witness} -- we define these below.

\begin{definition}[Negative Witness]\label{def:neg-witness}
A \emph{negative witness} for $(H,\ket{\psi_0},\Psi^{\cal A},\Psi^{\cal B})$ is a pair of vectors $\ket{w_{\cal A}},\ket{w_{\cal B}}\in H$ such that 
$\ket{w_{\cA}} \in \spaan{\Psi_\cA}$, $\ket{w_\cB} \in \spaan{\Psi_\cB}$ 
and $\ket{\psi_0}=\ket{w_{\cal A}}+\ket{w_{\cal B}}$.
\end{definition}

\begin{definition}[Positive Witness]\label{def:pos-witness}
A \emph{positive witness} for $(H,\ket{\psi_0},\Psi^{\cal A},\Psi^{\cal B})$ is a vector $\ket{w}\in H$ such that $\braket{\psi_0}{w}\neq 0$ and 
$\Pi_\cA \ket{w} = \Pi_\cB \ket{w} = 0$.
\end{definition}

Then the following theorem states the existence of a general type of algorithm -- what we are calling a phase estimation algorithm -- that we will use to prove \Cref{thm:loop-composition-informal}.

\begin{theorem}[\cite{jeffery2022kDist}]\label{thm:phase-est-fwk}
Fix $(H,\ket{\psi_0},\Psi^{\cal A},\Psi^{\cal B})$ as in \Cref{def:phase-est-alg}.
Suppose we can generate the state $\ket{\psi_0}$ in cost ${\sf S}$, and implement $U_{\cal AB}=(2\Pi_{\cal A}-I)(2\Pi_{\cal B}-I)$ in cost ${\sf A}$.
Let $c_+\in [1,50]$ be some constant, and let ${\cal C}_-\geq 1$ be a positive real number (that may scale with some implicit input size). 
Suppose we are guaranteed that exactly one of the following holds:
\begin{description}
\item[Positive Condition:] There is a positive witness $\ket{w_+}$ s.t.~$\frac{|\braket{w_+}{\psi_0}|^2}{\norm{\ket{w_+}}^2}\geq \frac{1}{c_+}$.
\item[Negative Condition:] There is a negative witness $\ket{w_{\cal A}},\ket{w_{\cal B}}$ s.t.~$\norm{\ket{w_{\cal A}}}^2 \leq {\cal C}_-$.
\end{description}
Then there is a quantum algorithm that distinguishes these two cases with bounded error in cost
$$\O\left({\sf S}+\sqrt{{\cal C}_-}{\sf A}\right).$$
\end{theorem}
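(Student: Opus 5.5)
The plan is the standard phase-estimation argument for a product of two reflections $U_{\cal AB}=(2\Pi_{\cal A}-I)(2\Pi_{\cal B}-I)$. Prepare $\ket{\psi_0}$ at cost ${\sf S}$. Then run phase estimation of $U_{\cal AB}$ on $\ket{\psi_0}$ to precision $\Theta\left(1/\sqrt{{\cal C}_-}\right)$, which costs $O\left(\sqrt{{\cal C}_-}\right)$ controlled applications of $U_{\cal AB}$, i.e.\ $O\left(\sqrt{{\cal C}_-}\,{\sf A}\right)$. Finally, output ``positive'' if the phase register reads $0$. The proof then reduces to two facts: a lower bound on the weight of $\ket{\psi_0}$ on the $1$-eigenspace of $U_{\cal AB}$ in the positive case, and an upper bound on its weight on eigenphases of magnitude at most $\Theta\left(1/\sqrt{{\cal C}_-}\right)$ in the negative case.

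\textbf{Positive case.} A positive witness $\ket{w_+}$ satisfies $\Pi_{\cal A}\ket{w_+}=\Pi_{\cal B}\ket{w_+}=0$. Hence $U_{\cal AB}\ket{w_+}=(-1)(-1)\ket{w_+}=\ket{w_+}$, so $\ket{w_+}$ is an eigenvalue-$1$ eigenvector. The projection of $\ket{\psi_0}$ onto the $1$-eigenspace therefore has squared norm at least $|\braket{w_+}{\psi_0}|^2/\norm{\ket{w_+}}^2\geq 1/c_+$. Phase estimation returns $0$ on that component with certainty, so the acceptance probability is at least $1/c_+\geq 1/50$.

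\textbf{Negative case.} This is the main obstacle, and I would handle it with the effective spectral gap lemma. Write $\ket{\psi_0}=\ket{w_{\cal A}}+\ket{w_{\cal B}}$. Since $\ket{\psi_0}\perp{\cal B}$ and $\ket{w_{\cal B}}\in{\cal B}$, we have $(2\Pi_{\cal B}-I)\ket{\psi_0}=-\ket{\psi_0}$. This gives $(I-U_{\cal AB})\ket{\psi_0}=\ket{\psi_0}+(2\Pi_{\cal A}-I)\ket{\psi_0}=2\Pi_{\cal A}\ket{\psi_0}$. I would instead use the decomposition to write $\ket{\psi_0}$ as $(I-U_{\cal AB})\ket{u}$ plus a component in ${\cal A}$. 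For instance, $(I-(2\Pi_{\cal A}-I)(2\Pi_{\cal B}-I))\ket{w_{\cal B}}=\ket{w_{\cal B}}-(2\Pi_{\cal A}-I)\ket{w_{\cal B}}$. Tracking the pieces, one can choose $\ket{u}$ with $\norm{\ket{u}}^2=O(\norm{\ket{w_{\cal A}}}^2)$ so that $\ket{\psi_0}$ lies in the range of $I-U_{\cal AB}$ up to a controlled error, following the argument of Lee et al./Reichardt used in \cite{jeffery2022kDist}.

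The effective spectral gap lemma then says: if $P_\Theta$ is the projector onto eigenvectors of $U_{\cal AB}$ with phase in $(-\Theta,\Theta)$, then $\norm{P_\Theta\,\Pi_{\cal A}\ket{u}}\leq\frac{\Theta}{2}\norm{\ket{u}}$ for any $\ket{u}$ with $\Pi_{\cal B}\ket{u}=0$ (or the symmetric variant). Taking $\Theta=\delta/\sqrt{{\cal C}_-}$ for a small constant $\delta$, the weight of $\ket{\psi_0}$ on small phases is at most $O(\delta^2)$. Phase estimation at precision $\Theta$ therefore outputs $0$ with probability at most a small constant, for example $1/(4c_+)$, after accounting for its leakage via standard bounds.

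\textbf{Amplification.} The acceptance probabilities in the two cases differ by a constant gap, at least $1/c_+$ versus at most $1/(2c_+)$. Constantly many repetitions, or amplitude estimation with $O(1)$ overhead since $c_+$ is a constant, distinguish the cases with bounded error at total cost $O\left({\sf S}+\sqrt{{\cal C}_-}\,{\sf A}\right)$. The delicate step is the negative case: getting the right form of the effective spectral gap lemma with $\ket{\psi_0}\perp{\cal B}$, so that the bound depends only on $\norm{\ket{w_{\cal A}}}^2$.
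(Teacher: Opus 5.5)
\textbf{Assessment: your outline is the standard proof, but the negative case has a gap.} The paper does not reprove this result; it imports it from \cite{jeffery2022kDist}. Your plan matches that standard argument. The positive witness is a $1$-eigenvector of $U_{\cA\cB}$. The negative witness controls the weight of $\ket{\psi_0}$ on small phases via the effective spectral gap lemma. Phase estimation to precision of order $1/\sqrt{\cC_-}$, plus $O(1)$ repetitions, finishes the job. Your positive case and your leakage/amplification bookkeeping are fine.

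The negative case, however, is asserted rather than proved:
\begin{enumerate}
\item You never exhibit $\ket{u}$.
\item The ``controlled error'' is never quantified.
\item The version of the lemma you quote, which bounds $\norm{P_\Theta \Pi_{\cA}\ket{u}}$ for $\Pi_{\cB}\ket{u}=0$, cannot be applied to $\ket{\psi_0}$ itself. In general $\ket{\psi_0}\notin\cA$; it is not in $\cA$ in either construction of \Cref{sec:loop-composition}. Swapping the roles of $\cA$ and $\cB$ does not help, since $\ket{\psi_0}\perp\cB$.
\end{enumerate}
So, as written, the bound $\norm{P_\Theta\ket{\psi_0}}^2=O(\delta^2)$ does not follow.

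\textbf{The fix is one line.} Since $\ket{\psi_0}\perp\cB$ and $\ket{w_{\cB}}\in\cB$, we have $\ket{\psi_0}=(I-\Pi_{\cB})\ket{w_{\cA}}$, with $(2\Pi_{\cA}-I)\ket{w_{\cA}}=\ket{w_{\cA}}$. Hence
\[
(I-U_{\cA\cB}^\dagger)\ket{w_{\cA}}=\ket{w_{\cA}}-(2\Pi_{\cB}-I)\ket{w_{\cA}}=2\ket{\psi_0}.
\]
This is the effective spectral gap lemma applied to the complements $\cB^\perp,\cA^\perp$: their product of reflections is $U_{\cA\cB}^\dagger$, which has the same $P_\Theta$ as $U_{\cA\cB}$. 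Since $P_\Theta$ commutes with $U_{\cA\cB}^\dagger$ and $|1-e^{-i\theta}|\leq|\theta|$, this gives $\norm{P_\Theta\ket{\psi_0}}\leq\frac{\Theta}{2}\norm{\ket{w_{\cA}}}\leq\frac{\Theta}{2}\sqrt{\cC_-}$, which is what you need.

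In fact, your own two identities already close the argument. Add $(I-U_{\cA\cB})\ket{\psi_0}=2\Pi_{\cA}\ket{\psi_0}$ to $(I-U_{\cA\cB})\ket{w_{\cB}}=2(I-\Pi_{\cA})\ket{w_{\cB}}=2(I-\Pi_{\cA})\ket{\psi_0}$. This shows $\ket{\psi_0}=(I-U_{\cA\cB})\ket{u}$ \emph{exactly}, with no error term, where $\ket{u}=\frac12(\ket{\psi_0}+\ket{w_{\cB}})=\ket{\psi_0}-\frac12\ket{w_{\cA}}$. Moreover $\norm{\ket{u}}=\frac12\norm{\ket{w_{\cA}}}$, because $\braket{\psi_0}{w_{\cA}}=1$ (as $\ket{\psi_0}$ is a unit vector orthogonal to $\ket{w_{\cB}}$). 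Once this is written down, the rest of your argument goes through.
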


\subsection{The loop structure, without composition}
\label{sec:simple-picture}

In this section, we present a phase estimation algorithm for black-box search, in the sense of \Cref{def:phase-est-alg} and \Cref{thm:phase-est-fwk}. This does no better than Grover's algorithm, but its particular structure facilitates later replacement of the oracle query with a variable-time subroutine (in \Cref{sec:general-picture}). The results of this section are not required for \Cref{sec:general-picture}, but the reader may find the next section easier to follow after reading this section.

Recall from \Cref{sec:black-box} that in the problem of black-box search, the input is some $f:[N]\rightarrow\{0,1\}$, which we assume we can access via queries to a unitary $U_f$. To use such an algorithm in practice, this unitary would need to be instantiated by a subroutine, which we consider in the following section. For now, we will treat $U_f$ as a black box, that can be implemented in unit cost.

\begin{theorem}\label{thm:simple-picture}
    Fix $\mu\geq 1$. Let $f: [N] \to \set{0,1}$ be accessible through an oracle $U_f$, and let $M_f := \set{i \in [N] : f(i) = 1}$ be the set of marked elements.
    There exists a quantum phase estimation algorithm that decides whether $M_f$ is empty or $|M_f|\geq \mu$ -- i.e. that decides a promise version of black-box search -- with bounded error, in complexity
    $$\O\left(\sqrt{\frac{N}{\mu}}\log N \right).$$
\end{theorem}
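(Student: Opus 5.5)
We will instantiate \Cref{thm:phase-est-fwk} with a loop-shaped graph: a central ``hub'' vertex and, for each $i\in[N]$, a short path (a loop) that leaves the hub, queries $f$ on $i$, and returns. Concretely, let $H=\spann{\ket{0},\ket{i,0},\ket{i,1}: i\in[N]}$ and take $\ket{\psi_0}=\ket{0}$. The hub is $\ket{0}$, which corresponds to the initial register value $0$ in ${\cal I}=[N]'$.

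We put into $\Psi^{\cal A}$ a single ``star'' vector at the hub:
$$\ket{\psi_\star}=\ket{0}-\sqrt{\alpha}\sum_{i}\sqrt{\pi(i)}\ket{i,0},$$
with $\pi$ uniform and $\alpha$ a weight parameter to be chosen. For each unmarked $i$ we also put the vector $\ket{i,1}$ into $\Psi^{\cal A}$; for marked $i$ we omit it, which is how the query result enters. Into $\Psi^{\cal B}$ we put, for each $i$, the edge vector $\ket{i,0}-\ket{i,1}$. This vector is orthogonal to $\ket{\psi_0}$, as \Cref{def:phase-est-alg} requires.

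Reflecting around $\cB$ costs $O(1)$. Reflecting around $\cA$ requires a query $U_f$, which decides whether $\ket{i,1}$ lies in $\cA$, together with a reflection about $\ket{\psi_\star}$ within $\spann{\ket{0},\ket{i,0}}$. Preparing the uniform-superposition part of $\ket{\psi_\star}$ costs $O(\log N)$. Hence ${\sf A}=O(\log N)$ and ${\sf S}=O(1)$.

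\textbf{Positive case.} Suppose $|M_f|\ge\mu$. We take
$$\ket{w_+}=\ket{0}+\sqrt{\alpha/N}\sum_{i\in M_f}\bigl(\ket{i,0}+\ket{i,1}\bigr)$$
and rescale so that it is orthogonal to $\ket{\psi_\star}$. The orthogonality condition reads $1-\alpha|M_f|/N=0$, which fails for a fixed $\alpha$. So instead we weight the marked paths by a coefficient $c$ and solve $1=c\sqrt{\alpha/N}\,|M_f|$. This gives
$$\norm{\ket{w_+}}^2=1+2c^2|M_f|=1+\frac{2N}{\alpha|M_f|}.$$
Choosing $\alpha=N/\mu$ makes this at most $3$, so $c_+=3$. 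We must also check that $\ket{w_+}$ is orthogonal to every vector of $\Psi^{\cal B}$. This holds because $\ket{w_+}$ has equal coefficients on $\ket{i,0}$ and $\ket{i,1}$, while the edge vector is $\ket{i,0}-\ket{i,1}$. Orthogonality to the vectors $\ket{i,1}$ in $\Psi^{\cal A}$ holds for unmarked $i$, since $\ket{w_+}$ is supported on marked $i$ only.

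\textbf{Negative case.} Suppose $M_f=\emptyset$. Write
$$\ket{0}=\ket{\psi_\star}+\sqrt{\alpha/N}\sum_i\ket{i,0},$$
and split each $\ket{i,0}=(\ket{i,0}-\ket{i,1})+\ket{i,1}$. Here $\ket{i,0}-\ket{i,1}$ is in $\cB$ and $\ket{i,1}$ is in $\cA$, the latter because $i$ is unmarked. This gives
$$\ket{w_\cA}=\ket{\psi_\star}+\sqrt{\alpha/N}\sum_i\ket{i,1},$$
so $\norm{\ket{w_\cA}}^2=1+2\alpha=O(N/\mu)$.

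\Cref{thm:phase-est-fwk} then gives cost $O\bigl(\sqrt{N/\mu}\,\log N\bigr)$.

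The main obstacle is getting the normalization right. The weight $\alpha$ must simultaneously keep the positive witness overlap constant and keep the negative witness at $O(N/\mu)$; the computation above suggests that $\alpha=N/\mu$ balances the two. A second thing to verify carefully is that the reflection about $\cA$ really costs one query plus $O(\log N)$ operations, using the fact that the star state and the $\ket{i,1}$ vectors are orthogonal.
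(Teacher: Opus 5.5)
Your proof is correct and is essentially the paper's. It uses the same phase-estimation instance: a hub vector $\ket{0}-\sqrt{\alpha/N}\sum_i\ket{i,0}$, dead-end vectors present only for unmarked $i$, the same positive and negative witnesses, and the same weight $\alpha=\w=N/\mu$. The only difference is that you shorten each path to hub--edge--dead-end (with the query-dependent vectors in $\cA$), while the paper routes it through the extra $\Psi_\blackdiamond$ and $\Psi_\smallblacksquare$ nodes (with the dead ends in $\cB$) solely to leave room for splicing in the subroutine in \Cref{sec:general-picture}.

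Incidentally, your positive witness is cleaner than the paper's as printed: its $\ket{\bot,i,1}+\ket{1,i,1}$ terms need $b=0$ to be orthogonal to $\Psi_\blackdiamond$.
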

In fact, it is well known how to solve black-box search by doing phase estimation of the Grover iterate -- $U_\pi U_f$ in the notation of \Cref{sec:grover} -- which is a product of two reflections. The difference here is that we construct a product of two reflections in a way that will be amenable to composition with a subroutine in \Cref{sec:loop-composition}.
\begin{proof}
To design a phase estimation algorithm, we need to fix parameters as in \Cref{def:phase-est-alg}, starting with a space
$$H=\mathrm{span}\{\ket{d,i,b}:d\in\{0,1,\circ,\bot\},i\in [N],b\in\{0,1\}\}.$$
Next we need a pair of subspaces ${\cal A}$ and ${\cal B}$ of $H$. Towards that goal, define four sets of vectors, as follows, where $\w$ is a weight that governs the norms of the positive and negative witnesses, and will be optimally tuned later. 
\begin{align*}
    \Psi_\bullet &= \left\{ \ket{\circ, 0, 0} - \sum_{i \in [N]} \sqrt{\tfrac{\w}{N}}\, \ket{0, i, 0} \right\}, \\
    \Psi_\blackdiamond &= \left\{ \ket{0,i,0} - \ket{\bot, i, 0} : i \in [N] \right\}, \\
    \Psi_\smallblacksquare &= \left\{ \ket{\bot, i, b} - \ket{1, i, b} : i \in [N],\, b \in \set{0,1} \right\}, \\
    \Psi_\scalestar &= \left\{ \ket{1, i, 0} : f(i) = 0,\, i \in [N] \right\}.
\end{align*}
Let $\Psi_\cA = \Psi_\bullet \cup \Psi_\smallblacksquare$ and $\Psi_\cB = \Psi_\blackdiamond \cup \Psi_\scalestar$. These define the spaces ${\cal A}=\spaan{\Psi_{\cal A}}$ and ${\cal B}=\spaan{\Psi_{\cal B}}$, and thus the unitary $U_{\cal AB}$ on which we do phase estimation. It remains to define the initial state, which we set to $\ket{\psi_0} = \ket{\circ, 0, 0}$.

\Cref{fig:simple-span-graph} shows the \emph{overlap graph} of the four sets of vectors we have defined. From this, it is easy to verify that $\Psi_{\cal A}$ and $\Psi_{\cal B}$ each consist of orthogonal vectors, and $\ket{\psi_0}$ is orthogonal to ${\cal B}$, as required.

\begin{figure}
\centering
\begin{tikzpicture}[
    spA/.style={circle, draw, thick, minimum size=8mm, inner sep=1pt
    },
    spB/.style={circle, draw, thick, dashed, minimum size=8mm, inner sep=1pt
    },
    elabel/.style={
    fill=white, inner sep=1pt},
    slabel/.style={
    },
    >={Stealth[length=5pt]}
]

\node[spA] (Pcirc) at (0, 0) {$\Psi_\bullet$};
\node[spB] (Pbull) at (3.5, 0) {$\Psi_\blackdiamond$};
\node[spA] (Pbsq) at (7, 0) {$\Psi_\smallblacksquare$};
\node[spB] (Pstar) at (7, -3.5) {$\Psi_\scalestar$};

\draw[thick, ->] (Pcirc) -- node[elabel, above] {$\ket{0,i,0}$} (Pbull);
\draw[thick, ->] (Pbull) -- node[elabel, above] {$\ket{\bot, i, 0}$} (Pbsq);
\draw[thick, ->] (Pbsq) -- (Pstar);
\node[elabel, rotate=-90] at (7.25, -1.75) {$\ket{1, i, 0}$};
\node[elabel, rotate=-90] at (6.75, -1.75) {($f(i) = 0)$};

\draw[thick, ->] (Pbsq) -- ++(3.5, 0) node[midway, above] {$\ket{1,i,0}$} node[midway, below] {$(f(i) = 1)$};

\draw[thick, ->] (-3.5, 0) -- (Pcirc) node[midway, above] {$\ket{\psi_0} = \ket{\circ,0,0}$};

\node[anchor=west] at (-0.8, -5) {$ \Psi_\cA = \Psi_\bullet \cup \Psi_\smallblacksquare$ (solid), \quad $ \Psi_\cB = \Psi_\blackdiamond \cup \Psi_\scalestar$ (dashed)};

\end{tikzpicture}
\caption{Overlap graph for the simpler phase estimation algorithm (no inner transition spaces). Each node represents a set of vectors (defining a subspace), there is an edge between two nodes if and only if they represent non-orthogonal subspaces, and there is a dangling edge from a node if there is a degree of freedom of $H$ for which that node corresponds to the only space that has non-zero overlap with it (the directions on the edges are not meaningful, but rather intuitive). 
Solid circles represent the spaces that make up $\cal A$, and dashed circles $\cal B$. Since the solid circles form an independent set (no two solid-circle nodes are adjacent) they represent non-overlapping spaces, so since each of $\Psi_\bullet$ and $\Psi_\smallblacksquare$ contains pairwise orthogonal vectors, so does $\Psi_{\cal A}$. The same is true for dashed circles and $\Psi_{\cal B}$. 
We should interpret this graph as a kind of ``loop'', where control flow enters from the left, when it gets to the $\Psi_\smallblacksquare$ node, there are two possibilities. If there is some $i$ such that $f(i)=1$, it can exit through the right, though it only takes this option with some small probability. Otherwise, it goes to $\Psi_\star$, which is a dead end, and so it returns to $\Psi_\smallblacksquare$ to try again. 
Intuitively, when there is no $i$ such that $f(i)=1$, there is no way for amplitude to ``flow'' from the initial state $\ket{\psi_0}$ out of the graph, and it is ``trapped'', looping forever. 
In contrast to this ``loop'', the overlap graph of a straight-line program looks like a long (straight) line, with a node for each step.
}
\label{fig:simple-span-graph}
\end{figure}

We now separately analyze the cases when there exists a marked element and when not.

\paragraph{Case 1: Marked Item.}
In the case when there are marked elements, i.e., $\abs{M_f} \geq \mu$, define the following:
\begin{align*}
    \ket{w_+} &:= \ket{\circ,0,0} +  \sum_{i \in M_f}\frac{1}{\abs{M_f}}\sqrt{\frac{N}{\w}} \inparen{\ket{0,i,0} + \ket{\perp, i, 1} + \ket{1,i,1} }. 
\end{align*}
It is clear by inspection that this is orthogonal to all states in $\Psi_\blackdiamond$, $\Psi_{\smallblacksquare}$ and $\Psi_\scalestar$. To see that it is orthogonal to the state in $\Psi_\blackcircle$, we compute the inner product:
\begin{align*}
\braket{\circ,0,0}{\circ,0,0} - \sum_{i\in M_f}\sqrt{\frac{\w}{N}}\frac{1}{|M_f|}\sqrt{\frac{N}{\w}}\braket{0,i,0}{0,i,0} = 1- \frac{|M_f|}{|M_f|} = 0.
\end{align*}
Thus it is orthogonal to both ${\cal A}$ and ${\cal B}$, and therefore a positive witness (see \Cref{def:pos-witness}).
To analyze the complexity of the algorithm, we upper bound its squared norm:
\begin{align*}
    \norm{\ket{w_+}}^2 &= \braket{w_+}{w_+} 
    = 1 + \sum_{i \in M_f} 3\frac{1}{|M_f|^2}\frac{N}{\w}
    = 1+\frac{3N}{|M_f|\w}\leq 1+\frac{3N}{\mu\w}.
\end{align*}
Additionally we have $\braket{w_+}{\psi_0} = 1$. Thus, if we set
$\w = \frac{N}{\mu}$, we get $\frac{\norm{\ket{w_+}}^2}{\braket{w_+}{\psi_0}} \leq c_+$ for the constant $c_+=4$.

\paragraph{Case 0: No Marked Item.}
In the case when there are no marked elements i.e., $M_f=\emptyset$, we define:
\begin{align*}
    \ket{w_\cA} &:= \ket{\circ,0,0} -  \sum_{i \in [N]} \sqrt{\frac{\w}{N}} \ket{0,i,0} +  \sum_{i \in [N]} \sqrt{\frac{\w}{N}} \inparen{\ket{\perp,i,0} - \ket{1,i,0}} \in {\cal A}\\
    \ket{w_\cB} &:=  \sum_{i \in [N]} \sqrt{\frac{\w}{N}} \inparen{\ket{0,i,0} - \ket{\perp, i, 0}} +  \sum_{i \in [N]} \sqrt{\frac{\w}{N}} \ket{1,i,0} \in {\cal B}.
\end{align*}
It is clear from inspection that $\ket{w_\cA} + \ket{w_\cB} = \ket{\psi_0}$, establishing that these vectors do in fact form a negative witness (see \Cref{def:neg-witness}).
To analyze the complexity of the algorithm, we upper bound the squared norm of $\ket{w_{\cal A}}$:
\begin{align*}
    \norm{\ket{w_\cA}}^2 &= \braket{w_\cA}{w_\cA} = 1 + \sum_{i \in [N]} \frac{\w}{N} + \sum_{i \in [N]} \frac{2\w}{N} 
    = 1 + 3\w.
\end{align*}
Again using $\w=\frac{N}{\mu}$,
this is at most $\cC_-:=4N/\mu$.

\paragraph{Conclusion.}
We are now ready to apply \Cref{thm:phase-est-fwk}. To implement the unitary $U_{\cal AB}$, it is sufficient to be able to implement a map that generates normalizations of the vectors in $\Psi_{\cal A}=\Psi_\bullet\cup \Psi_{\smallblacksquare}$, and another map that generates normalizations of the vectors in $\Psi_{\cal B}=\Psi_{\blackdiamond}\cup\Psi_\scalestar$. We leave it an exercise to show that this can be done in complexity $\O(\log N)$ basic gates and queries to $U_f$. Similarly, the initial state can be generated in at most this complexity. Thus, \Cref{thm:phase-est-fwk} implies an algorithm that decides if $|M_f|=0$ (negative case) or $|M_f|\geq \mu$ (positive case) in complexity $\O\left({\sf S}+\sqrt{\cC_-}{\sf A}\right)=\O(\sqrt{N/\mu}\log N)$.
\end{proof}

This result recovers the cost expression one would expect from a standard Grover's search. In the general picture \Cref{sec:general-picture} we study what happens when each element has a different query cost.  We do this by composing subroutines into the loop structure shown in \Cref{fig:simple-span-graph}, whereas \Cref{thm:composition-formal} is proven by composing subroutines into an overlap graph for a straight-line program, that looks, essentially, like a straight line, except with branching behaviour similar to that shown in~\Cref{fig:branching}.

\subsection{Loop composition, with subroutines}
\label{sec:general-picture}

We next turn to the setting where the oracle is implemented by a quantum subroutine -- specifically a variable-time subroutine, as described in  \Cref{sec:VT-model}. Recall that such an algorithm consists of a sequence of unitaries $U_1,\dots,U_{\sf T}$, where we can imagine running the algorithm by applying $U_1$, then measuring $\{\Pi_1,I-\Pi_1\}$ to see if we're done, then if we're not yet done, apply $U_2$ and then measure $\{\Pi_2,I-\Pi_2\}$, etc. until either we measure that the algorithm is done, or we have applied the final unitary $U_{\sf T}$ (then we are definitely done). The number of steps we run the algorithm for on any input $i$ is thus a random variable from $[{\sf T}]'$, which we denote $T_i$. For simplicity, we will assume this subroutine computes $f$ exactly, so with error $\eps=0$.
\begin{theorem}\label{thm:general-picture}
    Consider the problem of \emph{variable-time search}, where the input to the search problem $f: [N] \to \set{0,1}$ is given by a variable-time quantum subroutine that computes $f$ (with no error). Let $T_i$ be the running time of this subroutine on input $i\in [N]$, which is a random variable on $[{\sf T}]$. Let $M_f := \set{i \in [N]: f(i)=1}$ denote the marked set with respect to the function $f$. Let $P_1\subseteq \{f : [N] \to \{0,1\} \ | \  M_{f} \neq \emptyset \}$. Then we have the following.
    \begin{enumerate}
        \item 
    When the expected subroutine costs $\E[{T_i}]$ are known for each $i \in [N]$, there exist bounded-error quantum algorithms that decide whether $M_f=\emptyset$ or $f\in P_1$ in each of the following incomparable complexities:
    \[
        \tO\left(\sqrt{\frac{\sum_{i \in [N]} \E[{T_i}]^2}{\min_{f \in P_1} \abs{M_f}}}\right),\qquad
        \tO\left(\sqrt{\frac{N}{\min_{f\in P_1}\sum_{j\in M_f}\frac{1}{\E[T_j]^2}}} \right).
    \]

    \item
    Even when the costs $\E[T_i]$ are unknown, there exist bounded-error quantum algorithms that decide whether $M_f=\emptyset$ or $f\in P_1$ in each of the following incomparable complexities:
    \[
    \tO\left(\sqrt{\frac{\sum_{i \in [N]} \E[{T_i}^2]}{\min_{f \in P_1} \abs{M_f}}}\right),\quad
        \tO\left(\sqrt{\frac{\sum_{i \in [N]} \E[{T_i}]}{\min_{f \in P_1} \sum_{j \in M_f} \frac{1}{ \E[{T_j}]}}}\right),\quad
        \tO\left(\sqrt{\frac{N}{\min_{f \in P_1} \sum_{j \in M_f} \frac{1}{\E[{T_j}^2]}}}\right).
    \]
    \end{enumerate}

\end{theorem}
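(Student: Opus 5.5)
We extend the construction in the proof of \Cref{thm:simple-picture}. The plan is to replace the single edge through $\Psi_\smallblacksquare$ (which stood for the query $U_f$) by a chain of transition spaces that simulate the variable-time subroutine step by step, and then apply \Cref{thm:phase-est-fwk}.

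\textbf{The construction.} Enlarge $H$ with vertices $\ket{t,i,a,z}$ for $t\in[{\sf T}]'$. For each $t$ and $i$, add a set $\Psi_t$ of vectors of the form
\[
\ket{t-1,i}\ket{\phi}-\sqrt{\w_{t,i}}\,\ket{t,i}U_t^i\ket{\phi}, \qquad \ket{\phi}\in H_{t-1}^\bot .
\]
These implement step $t$ of the subroutine on branch $i$, and the sets alternate between $\cal A$ and $\cal B$ according to the parity of $t$. Components in $\overline H_t$ are routed to an exit node. That node is a dead end (a $\Psi_\scalestar$-type space) when the answer register holds $0$, and a dangling edge when it holds $1$. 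The weights $\w_{t,i}$ and the root weights $\w_i$ (replacing $\sqrt{\w/N}$ in $\Psi_\bullet$) are the tunable parameters.

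\textbf{Positive witness.} For each marked $j$, run the subroutine's history state forward with amplitudes scaled by $1/\sqrt{\w_{t,j}}$ and exit through the dangling edge. Its squared norm is roughly $\sum_j \frac{1}{\w_j}$ times $\sum_t \Pr[T_j\ge t]/\w_{t,j}$. Combining with the root, the squared norm is of order $1/\sum_{j\in M_f}(\text{branch conductance})$.

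\textbf{Negative witness.} When no element is marked, every branch terminates in a dead end. We build $\ket{w_\cA}$ by propagating $\ket{\psi_0}$ along all branches, exactly as in Case 0 of \Cref{thm:simple-picture}. Its squared norm is $1+\sum_i \w_i\sum_t \w_{t,i}\Pr[T_i\ge t]\lesssim \sum_i \w_i\,\E[T_i]$ when $\w_{t,i}=1$.

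\textbf{Choosing the weights.} Each listed bound then follows by optimizing $\sqrt{\cC_-}$ subject to $c_+=O(1)$, after normalizing the positive witness.
\begin{itemize}
\item With uniform edge weights and $\w_i\propto \E[T_i]$ known, the negative cost is $\sum_i\E[T_i]^2$ and the positive cost is $\sim 1/|M_f|$. This gives the first known-cost bound.
\item With $\w_i\propto 1/\E[T_i]$, we get $N/\sum_j \E[T_j]^{-2}$.
\item When the costs are unknown, we cannot set $\w_i$ per element. Instead we put the weights on the time steps, taking $\w_{t,i}$ as a function of $t$ only, e.g.\ $\w_{t}\propto t$ or $\w_t\propto 1/t$.
\begin{itemize}
\item With $\w_t$ constant, we recover the $\ell_1$ bound $\sum_i\E[T_i]/\sum_j 1/\E[T_j]$.
\item With $\w_t\propto t$, the negative cost becomes $\sum_i\E[T_i^2]$ while the positive branch resistance stays constant-ish, which should give the $\ell_2$ bound.
\item With $\w_t\propto 1/t$, we get the $\ell_0$ bound.
\end{itemize}
\end{itemize}
Log factors come from this dyadic weighting and from the fact that we only know $\min_{f\in P_1}$ up to guessing.

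\textbf{Implementing the reflections.} $U_{\cal AB}$ must cost $\tO(1)$. This follows from the unit-cost access to $\sum\ket{i,t}\bra{i,t}\otimes U_t^i$ and the computational-basis structure of the $H_t$. We also need orthogonality within $\Psi_\cA$ and within $\Psi_\cB$. This holds because consecutive steps alternate sides and because $U_t$ fixes $H_{t-1}$.

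\textbf{Main obstacle.} The hardest step is the positive witness in the unknown-cost case. With $t$-dependent weights, the witness along branch $j$ has norm $\sum_t \Pr[T_j\ge t]/\w_t$, and it must be shown to be bounded by the stated quantity, e.g.\ that $\sum_t \Pr[T_j\ge t]/t=O(\log{\sf T})$. It also has to be related to $\E[T_j]$ versus $\E[T_j^2]$ through Cauchy--Schwarz or Jensen. The first thing to try is dyadic bucketing of $t$, which costs only log factors.
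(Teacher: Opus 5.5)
Your proposal is correct in substance, but it takes a genuinely different route from the paper, and it has one parametrization slip that you need to fix.

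\textbf{How the two constructions differ.} The paper does not cut the subroutine off at its halting time. Between $\Psi_\to$ and $\Psi_\leftarrow$ it inserts exactly the inner transition spaces of \cite{jeffery2023subroutines}:
\begin{itemize}
\item a forward run $\Psi^{i,\to}_t$;
\item a coupling $\Psi^{i,\leftrightarrow}_t$ that XORs the answer into $b$ at the halting time;
\item a backward, uncomputing run $\Psi^{i,\leftarrow}_t$.
\end{itemize}
The loop of \Cref{thm:simple-picture} then closes through $\Psi_\leftarrow$ and $\Psi_\smallblacksquare$. The dead end $\Psi_\scalestar$ is decided on the clean bit $b=f(i)$, with the workspace already returned to $\ket{0,0}\ket{0}$. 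All orthogonality and norm facts are imported from \Cref{lem:inner-transition-witnesses}.

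You instead stop each branch at its halting time and decide dead end versus dangling edge directly on the answer register. That is legitimate here, for two reasons. First, $U_t$ fixes $H_{t-1}$ and the subroutine is exact, so the frozen part $\ket{t,i}\Pi_{\overline H_t}\ket{w^t(i)}$ lies in $\ket{f(i)}_A\otimes\overline H_t$. Second, nothing in the loop uses the workspace after the query, so nothing needs uncomputing.

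One detail you should state: the time-$t$ dead end must sit on the side opposite $\Psi_t$, i.e.\ the same side as $\Psi_{t+1}$. It is orthogonal to $\Psi_{t+1}$ because $\Psi_{t+1}$ only touches $H_A\otimes H_t^\perp$ at time $t$.

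With that, your forward history-state witnesses exist. Their sizes are the paper's $\E[\sum_{t\le T_i}1/\a_t]$ and $\E[\sum_{t\le T_i}\a_t]$ without the factor $2$. So the same choices of $\w_i$ and $\b_i$ give all five bounds.

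Your version buys simplicity and self-containedness. It gives up the paper's conceptual point: that the same gadget used in straight-line composition, which must uncompute because there the output is consumed later, composes unchanged into a loop. Your construction relies on the query being the last step before exit.

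\textbf{Corrections.}
\begin{itemize}
\item As written, $\ket{t-1,i}\ket\phi-\sqrt{\w_{t,i}}\ket{t,i}U_t^i\ket\phi$ makes $\w_{t,i}$ a \emph{ratio} of consecutive weights. History-state amplitudes then become $\prod_{s\le t}\w_{s,i}^{\pm 1/2}$, and $\w_t\propto t$ or $\w_t\propto 1/t$ would yield $t!$-size witnesses. Your norm formulas presuppose absolute weights, i.e.\ transitions $\sqrt{\a_{t-1}}\ket{t-1,i}\ket\phi-\sqrt{\a_t}\ket{t,i}U_t^i\ket\phi$ with $\a_t=t+1$ or $\a_t=1/(t+1)$, as in the paper.
\item Your ``main obstacle'' is not one. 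Pointwise, $\sum_{t=0}^{T}\frac{1}{t+1}\le 1+\ln(T+1)$ and $\sum_{t=0}^{T}(t+1)=\frac{(T+1)(T+2)}{2}\le 3T^2$ for $T\ge 1$. Taking expectations gives $O(\log{\sf T})$ and $O(\E[T^2])$ directly. No dyadic bucketing, Cauchy--Schwarz or Jensen is needed.
\item The log factors come only from this $\log{\sf T}$ and from the $O(\log(N{\sf T}))$ cost of the reflections. The $\log{\sf T}$ is absorbed by scaling the root weights by $\log{\sf T}$, as in (ii a). The paper treats the global quantities $\min_{f\in P_1}(\cdot)$ as known parameters, so no guessing enters.
\end{itemize}
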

The expressions in Item 2 recover the $\ell_2$-, $\ell_1$-, and $\ell_0$-variable-time search expressions discussed in \Cref{sec:VT-bounds} (that is, equations~\eqref{eq:amb-VT-gen},~\eqref{eq:ell1-VT-gen} and~\eqref{eq:ell0-VT-gen} respectively) in the special case where $\pi$ is uniform. 
Note that $\E[T_i^2]-\E[T_i]^2$ is precisely the variance of $T_i$, and when it is large, $\E[T_i]^2$ may be significantly smaller than $\E[T_i^2]$, so when the $\E[T_i]$ are known, we can do slightly better (this was also already shown in~\cite{jeffery2023subroutines}).

\begin{proof}
As in the proof of \Cref{thm:simple-picture}, we start by fixing parameters of a phase estimation algorithm, beginning with defining 
$$H=\mathrm{span}\{\ket{d,i,b,a,z,t}:d\in\{\circ,\bot,0,1,\rightarrow,\leftarrow,\leftrightarrow\},i\in [N]',b\in\{0,1\},a\in\{0,1\},z\in{\cal Z},t\in [{\sf T}]'\},$$
where $(a,z)\in\{0,1\}\times {\cal Z}$ represent the answer register and workspace of the subroutine; $t$ represents the subroutine's program counter, and $d$, $i$, and $b$ serve a similar function as in \Cref{thm:simple-picture}.

Next we define several subspaces, via sets of vectors, that will be combined to define ${\cal A}$ and ${\cal B}$. 
We first define five sets of vectors somewhat similar to the four in the proof of \Cref{thm:simple-picture}, but now we have weights $\{\w_i\}_{i\in [N]}$ that depend on $i$ (to be specified later), and the set $\Psi_\blackdiamond$ has been replaced by two sets, $\Psi_{\rightarrow}$ and $\Psi_{\leftarrow}$:
\begin{equation}
\begin{split}
    \Psi_\bullet &:= \left\{ \ket{\circ} \left(\ket{0,0} - \sum_{i \in [N]} \sqrt{\tfrac{\w_{i}}{N}}\,\ket{i,0}\right) \ket{0,0}\ket{0} \right\} \\
    \Psi_\to &:= \left\{ (\ket{\circ} - \ket{\to})\ket{i,b}\ket{a,0}\ket{0} : a, b \in \set{0,1},\, i \in [N] \right\} \\
    \Psi_\leftarrow &:= \left\{ (\ket{\leftarrow} - \ket{\bot})\ket{i,b}\ket{a,0}\ket{0} : a, b \in \set{0,1},\, i \in [N] \right\} \\
    \Psi_\smallblacksquare &:= \left\{ (\ket{\bot} - \ket{1})\ket{i,b}\ket{a,0}\ket{0} : a, b \in \set{0,1},\, i \in [N] \right\} \\
    \Psi_\scalestar &:= \left\{ \ket{1}\ket{i,0}\ket{a,0}\ket{0} : f(i) = 0,\, i \in [N] \right\}.
\end{split}\label{eq:Psi-non-alg}
\end{equation}
If we now consider the overlap graph of the spaces defined by just these five sets shown in \Cref{fig:disconnected-overlap-graph}, we see that it is disconnected. We now add spaces, based on the subroutines, that ``stitch together'' the two components, by overlapping $\spaan{\Psi_{\leftarrow}}$ and $\spaan{\Psi_\rightarrow}$.

\begin{figure}
\centering
\begin{tikzpicture}[
    spA/.style={circle, draw, thick, minimum size=8mm, inner sep=1pt, font=\footnotesize},
    spB/.style={circle, draw, thick, dashed, minimum size=8mm, inner sep=1pt, font=\footnotesize},
    elabel/.style={font=\tiny, fill=white, inner sep=1pt},
    slabel/.style={font=\scriptsize, text=black!50},
    >={Stealth[length=5pt]}
]

\node[spA] (Pcirc) at (0, 0) {$\Psi_\bullet$};
\node[spB] (Pright) at (2, 0) {$\Psi_\rightarrow$};
\node[spB] (Pleft) at (8,0) {$\Psi_\leftarrow$};
\node[spA] (Pbsq) at (10, 0) {$\Psi_\smallblacksquare$};
\node[spB] (Pstar) at (10, -2) {$\Psi_\scalestar$};

\draw[thick, ->] (-2,0)--(Pcirc);
\draw[thick,->] (Pcirc)--(Pright);
\draw[thick,->] (Pright)--(4,0);
\draw[thick, ->] (6,0)--(Pleft); \draw[thick, ->] (Pleft)--(Pbsq); 
\draw[thick, ->] (Pbsq)--(12,0);
\draw[thick, ->] (Pbsq) -- (Pstar);


\end{tikzpicture}
\caption{The overlap graph of just the spaces defined by the spans of $\Psi_\bullet$, $\Psi_{\rightarrow}$, $\Psi_\leftarrow$, $\Psi_\smallblacksquare$ and $\Psi_\scalestar$ is not connected. Compare with \Cref{fig:simple-span-graph}.}\label{fig:disconnected-overlap-graph}
\end{figure}

The following \emph{inner transition spaces} are precisely those from \cite{jeffery2023subroutines}\footnote{In fact, they are a special case, where (in the notation of \cite{jeffery2023subroutines}), $A_a\ket{b}=\ket{b\oplus a}$.}.  For $i \in [N]$ and $t \in [{\sf T}]'$, define: 
\begin{align*}
    \Psi^{i,\rightarrow}_{t} &:= \set{ \ket{\rightarrow}\ket{i,b} \inparen{ \sqrt{\a_t} \ket{a,z}\ket{t} - \sqrt{\a_{t+1}} U^i_{t+1} \ket{a,z} \ket{t+1} }: a,b \in \set{0,1}, z \in \cZ_{>t}}
    \\
    \Psi^{i,\leftarrow}_{t} &:= \set{\ket{\leftarrow}\ket{i,b} \inparen{ \sqrt{\a_t} \ket{a,z}\ket{t} - \sqrt{\a_{t+1}} U^i_{t+1} \ket{a,z} \ket{t+1} } 
         : a,b \in \set{0,1}, z \in \cZ_{>t}}
    \\
    \Psi^{i,\leftrightarrow}_{t} &:= \set{ \inparen{\ket{\rightarrow}\ket{i,b} - \ket{\leftarrow} \ket{i,b\oplus a} }\ket{a,z} \ket{t}
        : a,b \in \set{0,1}, z \in \cZ_{t}}.
\end{align*}
Here $U_t^i$ represents the $t$th step of the subroutine on input $i$, and $\{\a_t\}_{t=0}^{\sf T}$ are parameters we can tune, but we always set $\alpha_0=1$.

\Cref{fig:span-graph} shows the overlap graph of the subspaces for a single subroutine~$i$. Each node is a subspace; two nodes are connected when the spaces overlap. Solid circles denote $\cA$, the space spanned by $\Psi_\cA$, and dashed circles denote $\cB$, the space spanned by $\Psi_\cB$.
\begin{figure}[ht]
\centering
\resizebox{\textwidth}{!}{%
\begin{tikzpicture}[
    spA/.style={circle, draw, thick, minimum size=12mm, inner sep=1pt},
    spB/.style={circle, draw, thick, dashed, minimum size=12mm, inner sep=1pt},
    elabel/.style={font=\scriptsize, fill=white, inner sep=1pt},
    slabel/.style={font=\scriptsize, text=black!50},
    >={Stealth[length=5pt]}
]


\node[spA] (Pcirc) at (0, 0) {\Large{$\Psi_\bullet$}};
\node[spB] (Pto) at (4, 0) {\Large{$\Psi_\to$}};
\node[spA] (Pr0) at (8, 0) {\Large{$\Psi_0^{i,\to}$}};
\node[spB] (Pr1) at (12, 0) {\Large{$\Psi_1^{i,\to}$}};
\node[spA] (Pr2) at (16, 0) {\Large{$\Psi_2^{i,\to}$}};
\node at (18, 0) {\Large{$\cdots$}};

\node[spB] (Plr1) at (8, -4) {\Large{$\Psi_1^{i,\leftrightarrow}$}};
\node[spA] (Plr2) at (12, -4) {\Large{$\Psi_2^{i,\leftrightarrow}$}};
\node[spB] (Plr3) at (16, -4) {\Large{$\Psi_3^{i,\leftrightarrow}$}};
\node at (18, -4) {\Large{$\cdots$}};

\node[spA] (Pbull) at (0, -8) {\Large{$\Psi_\smallblacksquare$}};
\node[spB] (Pfrom) at (4, -8) {\Large{$\Psi_\leftarrow$}};
\node[spA] (Pl0) at (8, -8) {\Large{$\Psi_0^{i,\leftarrow}$}};
\node[spB] (Pl1) at (12, -8) {\Large{$\Psi_1^{i,\leftarrow}$}};
\node[spA] (Pl2) at (16, -8) {\Large{$\Psi_2^{i,\leftarrow}$}};
\node at (18, -8) {\Large{$\cdots$}};

\node[spB] (Pstar) at (-3.8, -8) {\Large{$\Psi_\scalestar$}};

\draw[thick, ->] (Pcirc) -- node[elabel, above] {$\ket{\circ}\ket{i,b}\ket{a,0}\ket{0}$} (Pto);
\draw[thick, ->] (Pto) -- node[elabel, above] {$\ket{\to}\ket{i,b}\ket{a,z}\ket{0}$} (Pr0);
\draw[thick, ->] (Pr0) -- node[elabel, above] {$\ket{\to}\ket{i,b}\ket{a,z}\ket{1}$} (Pr1);
\draw[thick, ->] (Pr1) -- node[elabel, above] {$\ket{\to}\ket{i,b}\ket{a,z}\ket{2}$} (Pr2);
\draw[thick, ->] (Pr2) -- (17.5, 0);

\draw[thick, <-] (Pbull) -- node[elabel, below] {$\ket{\bot}\ket{i,b}\ket{a,0}\ket{0}$} (Pfrom);
\draw[thick, <-] (Pfrom) -- node[elabel, below] {$\ket{\leftarrow}\ket{i,b}\ket{a,z}\ket{0}$} (Pl0);
\draw[thick, <-] (Pl0) -- node[elabel, below] {$\ket{\leftarrow}\ket{i,b}\ket{a,z}\ket{1}$} (Pl1);
\draw[thick, <-] (Pl1) -- node[elabel, below] {$\ket{\leftarrow}\ket{i,b}\ket{a,z}\ket{2}$} (Pl2);
\draw[thick, <-] (Pl2) -- (17.5, -8);

\draw[thick, ->] (Pbull) -- node[elabel, above] {$\ket{1}\ket{i,0}\ket{a,0}\ket{0}$} node[elabel, below] {$(f(i) = 0)$} (Pstar);

\draw[thick, ->] (Pbull) -- ++(0, -3);
\node[elabel, rotate=-90] at (0.3, -9.75) {$\ket{1}\ket{i,0}\ket{a,0}\ket{0}$};
\node[elabel, rotate=-90] at (-0.3, -9.75) {$(f(i) = 1)$};

\draw[thick, ->] (Pr0) -- (Plr1);
\node[elabel, rotate=-90] at (8.3, -2) {$\ket{\to}\ket{i,b}\ket{a,z}\ket{1}$};
\draw[thick, ->] (Plr1) -- (Pl0);
\node[elabel, rotate=-90] at (8.3, -6) {$\ket{\leftarrow}\!\ket{i,b\oplus a}\ket{a,z}\ket{1}$};

\draw[thick, ->] (Pr1) -- (Plr2);
\node[elabel, rotate=-90] at (12.3, -2) {$\ket{\to}\ket{i,b}\ket{a,z}\ket{2}$};
\draw[thick, ->] (Plr2) -- (Pl1);
\node[elabel, rotate=-90] at (12.3, -6) {$\ket{\leftarrow}\!\ket{i,b\oplus a}\ket{a,z}\ket{2}$};

\draw[thick, ->] (Pr2) -- (Plr3);
\node[elabel, rotate=-90] at (16.3, -2) {$\ket{\to}\ket{i,b}\ket{a,z}\ket{3}$};
\draw[thick, ->] (Plr3) -- (Pl2);
\node[elabel, rotate=-90] at (16.3, -6) {$\ket{\leftarrow}\!\ket{i,b\oplus a}\ket{a,z}\ket{3}$};

\draw[thick, ->] (0, 2.8) -- (Pcirc);
\node[elabel, rotate=-90] at (0.25, 1.8) { \scriptsize{$\ket{\circ}\ket{0,0}\ket{0,0}\ket{0}$}};

\end{tikzpicture}%
}
\caption{Overlap graph of subspaces for subroutine~$i$. Solid circles represent subspaces of $\cA$ and a dashed circles represent subspaces of $\cB$. Two nodes are adjacent when they represent non-orthogonal spaces. The graph is bipartite --- no two solid nodes or dashed nodes are adjacent. 
}
\label{fig:span-graph}
\end{figure}
\begin{align*}
    \Psi_\cA = \Psi_\bullet \cup \Psi_{\mathrm{even}} \cup \Psi_\smallblacksquare
    &,\quad 
    \Psi_\cB = \Psi_\to \cup \Psi_{\mathrm{odd}} \cup \Psi_\leftarrow \cup \Psi_\scalestar
    \end{align*}
where
\[
    \Psi_{\mathrm{even}} = \bigcup_{i \in [N]}\bigcup_{t \text{ even}} \left(\Psi_t^{i,\to} \cup \Psi_t^{i,\leftarrow} \cup \Psi_t^{i,\leftrightarrow}\right)
    , \mbox{ and }
    \Psi_{\mathrm{odd}} = \bigcup_{i \in [N]}\bigcup_{t \text{ odd}} \left(\Psi_t^{i,\to} \cup \Psi_t^{i,\leftarrow} \cup \Psi_t^{i,\leftrightarrow}\right) .
\]
Phase estimation is performed on the initial state
\begin{equation}\label{eq:init-state}
    \ket{\psi_0} := \ket{\circ}\ket{0,0}\ket{0,0}\ket{0}.
\end{equation}
We note that $\ket{\psi_0}$ is orthogonal to $\cB$, and that reflection over both spaces can be performed efficiently (see \Cref{lem:reflect-cost} for the proof).

\paragraph{Subroutine states.} To construct witnesses, we define \emph{subroutine states} in $H_A\otimes H_Z$, which encode the computation history of the run of the subroutine on a particular input. For each $i\in [N]$, define:
\begin{align*}
    \ket{w^0(i)}&=\ket{0,0}\\
    \forall t\in [{\sf T}],\; \ket{w^t(i)} &= U_t^i\Pi_{\geq t}\ket{w^{t-1}(i)}.
\end{align*}
For each $i \in [N]$, define:
\begin{align}
    \ket{w_+(i)} &:= (\ket{\to}\ket{i,0} + \ket{\leftarrow}\ket{i, f(i)}) \sum_{t \in [{\sf T}]'} \frac{1}{\sqrt{\a_t}} \ket{w^t(i)}\ket{t} \label{eq:pos-hist}\\
    \ket{w_-(i)} &:= (\ket{\to}\ket{i,0} - \ket{\leftarrow}\ket{i, f(i)}) \sum_{t \in [{\sf T}]'} {(-1)^t}{\sqrt{\a_t}} \ket{w^t(i)}\ket{t} \nonumber\\
    \ket{w_-'(i)} &:= \ket{w_-(i)} - (\ket{\to}\ket{i,0} - \ket{\leftarrow}\ket{i, f(i)})\ket{0,0}\ket{0}.\label{eq:neg-hist-prime}
\end{align}
The states $\ket{w_+(i)}$ and $\ket{w_-(i)}$ are precisely the positive and negative history states defined in \cite[Definition~3.4]{jeffery2023subroutines}.
The state $\ket{w_+(i)}$ is orthogonal to $\spaan{\Psi_\mathrm{even}} \cup \spaan{\Psi_\mathrm{odd}}$, 
$\ket{w_-(i)} \in \spaan{\Psi_\mathrm{even}}$, and $\ket{w_-'(i)} \in \spaan{\Psi_\mathrm{odd}}$. 
More specifically we borrow the result from \cite{jeffery2023subroutines} using the fact that our subroutines are assumed to have no error:

\begin{lemma}[{\cite[Corollary~3.7, Claim~3.8, Claim~3.9]{jeffery2023subroutines}}]
\label{lem:inner-transition-witnesses}
Let $\Pi_{{\mathrm{even}}}$ and $\Pi_{{\mathrm{odd}}}$ denote the orthogonal projectors onto
$\spaan{\Psi_{\mathrm{even}}}$ and $\spaan{\Psi_{\mathrm{odd}}}$ respectively.
Then for all $i \in [N]$:
\begin{enumerate}
    \item \emph{(Norms)} $\displaystyle\norm{\ket{w_+(i)}}^2 = 2\E\insquare{\sum_{t=0}^{T_i} \frac{1}{\a_t}}$ \quad and \quad $\displaystyle\norm{\ket{w_-(i)}}^2 = 2\E\insquare{\sum_{t=0}^{T_i} \a_t}$.

    \item \emph{(Positive witness)} 
$\norm{\Pi_{{\mathrm{even}}} \ket{w_+(i)}}^2 =0$ 
 and $        \norm{\Pi_{{\mathrm{odd}}} \ket{w_+(i)}}^2 =0$. 

    \item \emph{(Negative witness)}    
 $        \norm{(I - \Pi_{{\mathrm{even}}}) \ket{w_-(i)}}^2 =0$        and
        $\norm{(I - \Pi_{{\mathrm{odd}}}) \ket{w'_-(i)}}^2=0$. 
\end{enumerate}
\end{lemma}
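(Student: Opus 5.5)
The plan is to verify the three items directly from the definitions of the subroutine states and the inner transition spaces $\Psi^{i,\to}_t,\Psi^{i,\leftarrow}_t,\Psi^{i,\leftrightarrow}_t$. Throughout we use the variable-time structure: $U^i_{t}$ leaves $H_{t-1}$ invariant, and $\ket{w^t(i)}$ is the unnormalized state of the run that has not yet halted before step $t$. The key preliminary observation is a decomposition of the history. Write $\Pi_t$ for the projector onto $\overline{H}_t$. Then $\Pi_{\geq t+1}\ket{w^t(i)}=\Pi_{>t}\ket{w^t(i)}$ is the part that continues, and $\Pi_t\ket{w^t(i)}$ is the part that halts exactly at step $t$. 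Moreover, $\norm{\Pi_{>t}\ket{w^t(i)}}^2=\Pr[T_i>t]$ and $\norm{\Pi_t \ket{w^t(i)}}^2=\Pr[T_i=t]$. Because the measurements commute with the computational basis of $H_Z$ and future unitaries fix $H_{t-1}$, these probabilities are exactly those of the measured run.

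For the norms (item 1), the vectors $\ket{w^t(i)}\ket{t}$ for distinct $t$ are orthogonal, and the two halves $\ket{\to}\ket{i,0}$ and $\ket{\leftarrow}\ket{i,f(i)}$ are orthogonal. Hence $\norm{\ket{w_+(i)}}^2=2\sum_t \norm{\ket{w^t(i)}}^2/\a_t$. Since $\ket{w^t(i)}=U^i_t\Pi_{\geq t}\ket{w^{t-1}(i)}$ and $U^i_t$ is unitary, we get $\norm{\ket{w^t(i)}}^2=\Pr[T_i\geq t]$. Then $\sum_t \Pr[T_i\geq t]/\a_t=\E[\sum_{t=0}^{T_i}1/\a_t]$ by swapping sum and expectation. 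The same computation with $\a_t$ in place of $1/\a_t$ gives the negative norm.

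For the positive witness (item 2), we check orthogonality to each basis vector of the inner transition sets. For $\Psi^{i,\to}_t$ with $z\in\cZ_{>t}$, the inner product with the $\ket{\to}\ket{i,0}$ half is
\[
\sqrt{\a_t}\tfrac{1}{\sqrt{\a_t}}\braket{a,z}{w^t(i)}-\sqrt{\a_{t+1}}\tfrac{1}{\sqrt{\a_{t+1}}}\bra{a,z}U^{i\dagger}_{t+1}\ket{w^{t+1}(i)}.
\]
This vanishes because $U^{i\dagger}_{t+1}\ket{w^{t+1}(i)}=\Pi_{>t}\ket{w^t(i)}$ and $z\in\cZ_{>t}$. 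The $\leftarrow$ sets are handled identically. For $\Psi^{i,\leftrightarrow}_t$ with $z\in\cZ_t$, the halted component $\Pi_t\ket{w^t(i)}$ is supported only on $a=f(i)$, since the subroutine is exact. The inner product is then $\frac{1}{\sqrt{\a_t}}(\braket{a,z}{w^t(i)}-[b\oplus a=f(i)][b=0]\braket{a,z}{w^t(i)})$ restricted to $b=0$, which cancels. When $b=1$ both terms vanish, since the $\leftarrow$ label $b\oplus a=1\oplus f(i)$ does not match $f(i)$.

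For the negative witnesses (item 3), we telescope. Pairing consecutive terms $t$ (even) and $t+1$ of $\ket{w_-(i)}$ gives, on the continuing part, exactly $\sum_{a,z}$ of coefficients times $\Psi^{i,\to}_t$ and $\Psi^{i,\leftarrow}_t$ vectors. The halted part at even $t$ gives $\Psi^{i,\leftrightarrow}_t$ vectors, where the sign flip in $\ket{\to}\ket{i,0}-\ket{\leftarrow}\ket{i,f(i)}$ again requires $a=f(i)$. So $\ket{w_-(i)}\in\spaan{\Psi_{\mathrm{even}}}$. Removing the $t=0$ term shifts the pairing to odd $t$, which gives $\ket{w_-'(i)}\in\spaan{\Psi_{\mathrm{odd}}}$. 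The main obstacle is the bookkeeping of the halting decomposition: we must split each $\ket{w^t(i)}$ into its $\cZ_t$ and $\cZ_{>t}$ parts and match the signs $(-1)^t$ and the parity of $t$. We also need the boundary $t={\sf T}$, where $\cZ_{>{\sf T}}=\emptyset$ so everything halts. We would handle this first in the generic step and then check both endpoints.
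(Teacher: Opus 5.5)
Your proposal is correct. The paper does not reprove this lemma but imports it from \cite{jeffery2023subroutines}, and your direct verification is essentially the computation behind those cited claims: $\norm{\ket{w^t(i)}}^2=\Pr[T_i\geq t]$ for the norms, term-by-term cancellation against $\Psi^{i,\to}_t,\Psi^{i,\leftarrow}_t,\Psi^{i,\leftrightarrow}_t$ for the positive witness, and pairing time steps $(t,t+1)$ with $t$ even (resp.\ odd) for $\ket{w_-(i)}$ (resp.\ $\ket{w_-'(i)}$). Two facts you use only implicitly should be stated: $\ket{w^t(i)}$ has no component in $H_A\otimes H_{t-1}$, because $U^i_t$ fixes that space pointwise and so preserves its complement; and exactness forces each halted piece $\Pi_t\ket{w^t(i)}$ onto $a=f(i)$, because halted pieces are untouched by later $U^i_{t'}$, so the final state is the orthogonal sum $\sum_t \Pi_t\ket{w^t(i)}$.
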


These properties allow us to build the positive and negative witnesses for phase estimation of the product of reflections around $\cA$ and $\cB$.

\paragraph{Case 1: Marked item.}
Suppose $f \in P_1$ i.e., $M_f \neq \emptyset$. 
Consider the following positive witness $\ket{w_+}$ defined as
\begin{equation}\label{eq:w-plus}
\begin{split}
    \ket{w_+} &:= \ket{\circ}\ket{0,0}\ket{0,0}\ket{0} \\
    & \quad + \sum_{i \in M_f} \frac{\sqrt{N} \sqrt{\b_i} }{\sqrt{\w_i}}  \Big(\ket{\circ}\ket{i,0}\ket{0,0}\ket{0} + \ket{w_+(i)} 
    + \ket{\perp}\ket{i,1}\ket{0,0} + \ket{1}\ket{i,1}\ket{0,0}\ket{0} \Big) \end{split}
\end{equation}
for some parameters $\set{\b_i}_{i \in M_f}$ that we set later in the analysis, but always ensuring
\begin{align}
    \sum_{i \in M_f} \sqrt{\b_i} &= 1.
    \label{eq:beta-normalize}  
\end{align}
Observe that $\ket{w_+}$ has a unit overlap with the initial state (defined in~\eqref{eq:init-state}): $\braket{w_+}{\psi_0} = 1$. 
To check for its orthogonality with the spaces $\cA$ and $\cB$, we compute the inner product with the state in $\Psi_\blackcircle$:
\begin{align*}
    \braket{\circ, 0, 0, 0, 0, 0}{\circ, 0, 0, 0, 0, 0} - \sum_{i \in M_f} \frac{\sqrt{N \b_i}}{ \sqrt{\w_i}} \sqrt{\frac{\w_i}{N}} \braket{\circ, i, 0, 0, 0, 0}{\circ, i, 0, 0, 0, 0} = 1 -  \sum_{i \in M_f} \sqrt{\b_i} = 0  
\end{align*}
by \eqref{eq:beta-normalize}.
Note that the projection of $\ket{w_+}$ onto states with some specific $i\in [N]$ in the second register, and $t=0$ in the last register is proportional to:
\begin{align*}
\ket{\circ}\ket{i,0}\ket{0,0}\ket{0} + 
    (\ket{\to}\ket{i,0} + \ket{\leftarrow}\ket{i, f(i)})\ket{0,0}\ket{0}
    + \ket{\perp}\ket{i,1}\ket{0,0} + \ket{1}\ket{i,1}\ket{0,0}\ket{0}
\end{align*}
by \eqref{eq:pos-hist}, and using $\ket{w^0(i)}=\ket{0,0}$ and $\alpha_0=1$.
Orthogonality with the states in $\Psi_{\rightarrow}$, $\Psi_\leftarrow$, $\Psi_{\smallblacksquare}$, and $\Psi_{\scalestar}$ (see \eqref{eq:Psi-non-alg}) is then clear from inspection.
For inner transition states in $\Psi_t^{i,\rightarrow}$, $\Psi_{t}^{i,\leftarrow}$ and $\Psi_t^{i,\leftrightarrow}$, orthogonality follows from \Cref{lem:inner-transition-witnesses}.
\begin{align*}
    \norm{\Pi_{{\mathrm{even}}} \ket{w_+}}^2 =  \sum_{i \in M_f} \frac{\sqrt{N\b_i}}{\sqrt{\w_i}} \norm{\Pi_{\mathrm{even}} \ket{w_+(i)}}^2 = 0 
\end{align*}
    and similarly $\norm{\Pi_{{\mathrm{odd}}} \ket{w_+}}^2 = 0$.
Consequently, $\ket{w_+}$ is orthogonal to both spaces $\cA$ and $\cB$.
Using \Cref{lem:inner-transition-witnesses} the size of our positive witness is 
\begin{align}
    \norm{\ket{w_+}}^2 
    &= 1 + N  \sum_{i \in M_f} \frac{\b_i}{\w_i} \inparen{ 1 +  \norm{\ket{w_+(i)}}^2  + 2}
    = 1 + N  \sum_{i \in M_f} \frac{\b_i}{\w_i} \inparen{ 3 +  2\E\insquare{\sum_{t=0}^{{T_i}} \frac{1}{\a_t}} }. 
    \label{eq:pos-witness-size} 
\end{align}

\paragraph{Case 0: No marked item.}
We present a negative witness in the case when $f(i) = 0$ for all $i$. Define:
\begin{align*}
    \ket{w_\cA} &= \ket{\circ}\ket{0,0}\ket{0,0}\ket{0} +  \sum_{i \in [N]} \sqrt{\frac{\w_i}{N}} \inparen{ - \ket{\circ}\ket{i,0}\ket{0,0}\ket{0} 
    + \ket{w_-(i)}
    + \inparen{\ket{\perp} - \ket{1}} \ket{i,0}\ket{0,0}\ket{0}}.  
\end{align*}
Here the first two terms together are in $\spaan{\Psi_\blackcircle}$ and the fourth term is in $\spaan{\Psi_{\smallblacksquare}}$ by inspection (see \eqref{eq:Psi-non-alg}), and the third term is in $\spaan{\Psi_{ \text{even} }}$ according to \Cref{lem:inner-transition-witnesses}.
Similarly if we define $\ket{w_\cB}$ and separate into terms as follows
\begin{align*}
    \ket{w_{\cB}}   : = \ket{\psi_0} - \ket{w_\cA}
    &= \sum_{i \in [N]} \sqrt{\frac{\w_{i}}{N}} \inparen{\ket{\circ} - \ket{\rightarrow}}\ket{i,0}\ket{0,0}\ket{0}
    \\
    & \quad + \sum_{i \in [N]} \sqrt{\frac{\w_{i}}{N}} \inparen{ \inparen{\ket{\rightarrow} - \ket{\leftarrow}}\ket{i,0}\ket{0,0}\ket{0}
    - \ket{w_-(i)}}
    \\
    & \quad + \sum_{i \in [N]} \sqrt{\frac{\w_{i}}{N}} \inparen{\ket{\leftarrow} - \ket{\perp}}\ket{i,0}\ket{0,0}\ket{0}
     + \sum_{i \in [N]} \sqrt{\frac{\w_{i}}{N}} \ket{1}\ket{i,0}\ket{0,0}\ket{0},
\end{align*}
the first, third and fourth terms are in the spans of $\Psi_{\rightarrow}$, $\Psi_{\leftarrow}$ and $\Psi_{\scalestar}$ respectively and the second term, which  is inside the parentheses, is $ -\sum_i\sqrt{\w_i/N}\ket{w'_-(i)}$ (see \eqref{eq:neg-hist-prime}), which, is in $\spaan{\Psi_{ \mathrm{odd} }}$ from \Cref{lem:inner-transition-witnesses}. 
Therefore we have $\ket{w_\cA} \in \cA$ and $\ket{w_\cB} \in \cB$.
The size of the negative witness, again from \Cref{lem:inner-transition-witnesses}, is:
\begin{align}
    \norm{\ket{w_\cA}}^2 
    &= 1 + \frac{1}{N}  \sum_{i \in [N]} \w_i \inparen{1 + \norm{\ket{w_-(i)}}^2 + 2}
= 1 + \frac{1}{N}  \sum_{i \in [N]} \w_i \inparen{3 + 2 \E \insquare{ \sum_{t=0}^{T_i}  \a_t} }.
    \label{eq:neg-witness-size}
\end{align}

\paragraph{Costs of steps of phase estimation.}
In order to be able to use \Cref{thm:phase-est-fwk} we need to analyze the cost ${\sf S}$, of preparing our initial state $\ket{\psi_0}$, which is easily seen to be at most $O(\log N{\sf T})$; and the cost $\sf A$, of performing a single application of $U_{\cal AB}=(2\Pi_{\cal A}-I)(2\Pi_{\cal B}-I)$. The latter is also $O(\log(N{\sf T}))$, as we show in the following lemma.

\begin{lemma}
    Each of $2\Pi_{\cal A}-I$ and $2\Pi_{\cal B}-I$ can be implemented in $O(\log (N {\sf T}))$ cost.
    \label{lem:reflect-cost}
\end{lemma}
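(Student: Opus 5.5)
The plan is the standard one: a reflection about the span of a set of pairwise orthogonal, mutually non-overlapping subspaces is a controlled combination of reflections about each individual subspace. So I will partition $\Psi_{\cal A}$ and $\Psi_{\cal B}$ into families that can be told apart by an efficient, coherent classification of computational basis states. Each family is then implemented as a basis change mapping its normalized vectors to basis states, followed by a reflection about those basis states.

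The key step is the decomposition by the overlap graphs of \Cref{fig:span-graph} and \Cref{fig:disconnected-overlap-graph}. Every basis state $\ket{d,i,b,a,z,t}$ lies in the support of at most one node of $\cal A$ and at most one node of $\cal B$. For $\cal B$:
\begin{itemize}
\item $\Psi_\to$ lives on $d\in\{\circ,\to\}$ with $z=0$, $t=0$, $i\neq 0$;
\item $\Psi_\leftarrow$ lives on $d\in\{\leftarrow,\bot\}$ with $t=0$;
\item $\Psi_\scalestar$ lives on $d=1$;
\item the odd inner transition spaces live on $d\in\{\to,\leftarrow\}$, at times $t,t+1$ with $t$ odd.
\end{itemize}
The inner transition spaces are disjoint in their supports from the rest except at $t=0$ or $t=1$ boundaries, and these must be checked to respect the partition. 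Given such a basis state, I compute into an ancilla a label saying which family it belongs to, using the values of $d$, the parity of $t$, and whether $z\in{\cal Z}_t$ or $z\in{\cal Z}_{>t}$. Because the $\cal Z_{\le t}$ are defined in the computational basis, this membership test is a cheap computation. Controlled on the label, I apply the relevant local reflection and then uncompute the label. A subtlety is that the label must be constant on each subspace in the family. For the $\Psi^{i,\to}_t$ vectors, $U^i_{t+1}$ leaves $H_t$ invariant and maps the $z\in{\cal Z}_{>t}$ part unitarily into the complement. So the pair (time $t$, $z\notin{\cal Z}_{\le t}$) and (time $t+1$, image) is consistently labeled by "$d\in\{\to,\leftarrow\}$, lower time has parity $p$".

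For each family the local reflection is as follows.
\begin{itemize}
\item $\Psi_\bullet$: a state-preparation unitary for $\ket{0,0}-\sum_i\sqrt{\w_i/N}\ket{i,0}$ (after normalizing). Since the $\w_i$ are fixed known parameters, this is a tree-based preparation of a known amplitude vector on $\log N$ qubits. It costs $O(\log N)$ in the QRAM-style model of \Cref{sec:VT-model}, where we assume such data can be loaded.
\item $\Psi_\to,\Psi_\leftarrow,\Psi_\smallblacksquare$: two-dimensional rotations on $d$, reflecting about $(\ket{x}-\ket{y})/\sqrt2$. These are $O(1)$.
\item $\Psi_\scalestar$: one query to $f$ for $d=1,b=0$. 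Since the subroutine is exact, $f(i)$ is available; if $f$ is only given via the subroutine, I instead note that $\Psi_\scalestar$ being a dead end allows replacing the query, and I would look for the paper's convention here.
\item Inner transition vectors: apply $(U^i_{t+1})^\dagger$ controlled on $\ket{i}\ket{t+1}$ in unit cost (by the model assumption), which reduces $\sqrt{\a_t}\ket{t}-\sqrt{\a_{t+1}}\ket{t+1}$ to a two-level rotation on the counter. The $\leftrightarrow$ vectors need a controlled-$X$ on $b$ by $a$ and then a $d$ rotation.
\end{itemize}
Arithmetic on $t$ and $i$ costs $O(\log(N{\sf T}))$.

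The main obstacle I expect is verifying that the classification is well defined on the boundaries. At $t=0$, the sets $\Psi_\to$ and $\Psi^{i,\to}_0$ share $\ket{\to}\ket{i,b}\ket{a,0}\ket{0}$, but they lie on opposite sides ($\cal B$ versus $\cal A$), so within a single reflection there is no conflict. I would check each shared basis state listed on the edges of \Cref{fig:span-graph} for the same-side condition, which is exactly the bipartiteness of the overlap graph. The remaining issue is the $\Psi_\scalestar$ query: I would implement it with one call to the subroutine and argue via the variable-time cost model that this adds only $O(\log{\sf T})$. If that fails, I would observe that $\Psi_\scalestar$ is not actually needed to be the query itself, since the marking information $b\oplus a$ already flows through $\Psi^{i,\leftrightarrow}$.
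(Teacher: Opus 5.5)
Your architecture is sound and differs from the paper's mainly in packaging. The paper writes $2\Pi_{\cA}-I=(2\Pi_{\bullet}-I)(2\Pi_{\smallblacksquare}-I)(2\Pi_{\mathrm{even}}-I)$, which only requires the pieces to be mutually orthogonal. It implements each factor by conjugating a basis-state reflection with a state-preparation unitary, and cites Lemma~3.3 of~\cite{jeffery2023subroutines} for $\Psi_{\mathrm{even}}$. Your label-controlled construction needs the stronger property that the families have disjoint computational-basis supports. You verify this correctly: the top-level families are separated by $d$, the inner ones by $t$ and by $z\in\cZ_t$ versus $z\in\cZ_{>t}$, and no odd family touches $t=0$. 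You also rederive the inner-transition reflection by undoing $U^i_{t+1}$ on the upper time slot and rotating the counter. In exchange, your route produces $2\Pi-I$ directly. By contrast, a product of $k$ reflections about mutually orthogonal subspaces equals $(-1)^{k+1}$ times the reflection about their sum, so a literal product over the four families of $\cB$ is off by a sign. One small correction: that $z\in\cZ_{\le t}$ can be tested cheaply is part of the variable-time model (it is how the algorithm decides to halt). It is not a consequence of $H_t$ being spanned by basis vectors.

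The genuine gap is $\Psi_{\scalestar}=\{\ket{1}\ket{i,0}\ket{a,0}\ket{0}:f(i)=0\}$. No other family of $\cB$ lives on $d=1$, so on the states $\ket{1}\ket{i,0}\ket{a,0}\ket{0}$ the reflection $2\Pi_{\cB}-I$ acts exactly as the phase oracle $(-1)^{f(i)}$. Your primary plan fails. Exactness of the subroutine does not make $f(i)$ ``available''. One call to the subroutine also does not cost $O(\log{\sf T})$: the model only makes a single step $\sum_{i,t}\ket{i}\bra{i}\otimes\ket{t}\bra{t}\otimes U^i_t$ unit cost, so evaluating $f(i)$ takes up to ${\sf T}$ steps. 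In fact no $O(\log(N{\sf T}))$ implementation of this $f$-dependent reflection can exist in general. Used as the oracle in Grover's algorithm, it would solve variable-time search in $O(\sqrt N\log(N{\sf T}))$ however large the $T_i$ are, contradicting the matching $\ell_2$ lower bound the paper cites. So the statement can only be established after changing $\cB$.

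Your fallback is the right fix, but you must commit to it and check it against both witnesses. Replace $\Psi_{\scalestar}$ by the $f$-independent set $\{\ket{1}\ket{i,0}\ket{a,0}\ket{0}:i\in[N],a\in\{0,1\}\}$, whose reflection is a cheap basis-state test. This set still lives only on $d=1$, so $\Psi_{\cB}$ stays pairwise orthogonal and $\ket{\psi_0}$ stays orthogonal to $\cB$. In the negative case every $f(i)=0$, so the span is unchanged and $\ket{w_{\cB}}\in\cB$ still holds. In the positive case the only $d=1$ component of $\ket{w_+}$ is $\ket{1}\ket{i,1}\ket{0,0}\ket{0}$ for $i\in M_f$. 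It has $b=1$ precisely because $\Psi^{i,\leftrightarrow}$ XORs the subroutine's answer into $b$, so $\ket{w_+(i)}$ returns on $\ket{\leftarrow}\ket{i,f(i)}$. Hence $\ket{w_+}$ remains orthogonal to the enlarged $\cB$, and all dependence on $f$ now enters only through the $U^i_t$. The paper's own proof never mentions $\Psi_{\scalestar}$ (it only says a similar analysis holds for $\cB$), so you have spotted something the paper glosses over. As written, though, your main route for $\Psi_{\scalestar}$ is incorrect, and the fix is a change to the construction that needs exactly the verification above.
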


\begin{proof}
To derive the cost of performing a reflection around the spaces $\cA$ and $\cB$, it is enough to just look at the costs of reflections in the individual orthogonal subspaces.
If $\Pi_\blackcircle$, $\Pi_{\smallblacksquare}$ and $\Pi_{\mathrm{even}}$ are the projectors onto the spans of $\Psi_\blackcircle$, $\Psi_\smallblacksquare$ and $\Psi_{\mathrm{even}}$ respectively, the projector onto $\cA$, $\Pi_\cA$ can be written as
\begin{align*}
    \Pi_\cA &= \Pi_\blackcircle + \Pi_{\smallblacksquare} + \Pi_{\mathrm{even}}
    \intertext{which then allows us to write the reflection around $\cA$ as}
    2\Pi_{\cA} - I &= \inparen{2\Pi_\blackcircle - I} \inparen{2\Pi_\smallblacksquare - I} \inparen{2 \Pi_{\mathrm{even}} - I}
\end{align*}
because the product of projectors onto orthogonal spaces is zero. To perform a reflection around $\cA$ therefore  it is enough to perform individual reflections in the spans of $\Psi_\blackcircle$, $\Psi_\smallblacksquare$ and $\Psi_{\mathrm{even}}$ sequentially. 
Reflection around $\spaan{\Psi_\blackcircle}$ can be performed by implementing the unitary $G_\blackcircle : \ket{\circ}\ket{0,b
}\ket{a,z}\ket{t} \to \frac{1}{\sqrt{1+\w_i/N}} \ket{\circ}\inparen{\ket{0,b} -  \sum_{i\in [N]} \sqrt{\frac{\w_i}{N}} \ket{i,b}  }\ket{a,z}\ket{t}$ controlled on $\ket{\circ}$ in the first register, 
and performing the reflection around the zero state in the second register. More precisely we perform
\begin{align*}
    2 \Pi_\blackcircle - I &= G_\blackcircle (I \otimes \inparen{ 2\ketbra{0}{0} - I } \otimes I )G_\blackcircle^\dagger.
\end{align*}
This involves $O(\log N)$ basic operations. Similarly the reflection around  $\spaan{\Psi_\smallblacksquare}$ can be performed by implementing unitary $G_{\smallblacksquare}: \ket{0,i,b,a,z,t} \to \frac{1}{\sqrt{2}}(\ket{\perp}-\ket{1}) \ket{i,b,a,z,t}$ 
 and a reflection about the zero state in the first register. More precisely we do
\begin{align*}
    2 \Pi_\smallblacksquare - I &= G_\smallblacksquare ((2 \ketbra{0}{0}-I) \otimes I) G_\smallblacksquare^\dagger.
\end{align*}
This costs $O(1)$ basic operations. The reflection around $\spaan{\Psi_{\mathrm{even}}}$ can be implemented in $O(\log {\sf T})$ cost using \cite[Lemma 3.3]{jeffery2023subroutines} and our ability to perform $U_t$ in unit cost for superposition over $t \in [{\sf T}]'$. This results in a combined cost of $O(\log (N {\sf T}))$ for the reflection around $\cA$. A similar analysis holds for $\cB$, where an additional control on sixth register being $t=0$ for the reflection around $\spaan{\Psi_\rightarrow}$ and $\spaan{\Psi_\leftarrow}$ incurs an additive $O(\log {\sf T})$ cost.
\end{proof}

\paragraph{Cost analysis for different settings of parameters.}
Here we analyze the witness sizes for a few different settings of the parameters $\w_i$ and $\a_t$, in order to give the five upper bounds claimed in the theorem statement. We begin with the case where the costs are known.

\paragraph{(i) Known costs.} When the costs $\E[T_i]$ are known, in the sense that they can be efficiently computed from each $i$, then the algorithm, and in particular the weights $\w_i$, can depend on them. In this regime, we do not need the weights $\a_t$, so we set $\a_t = 1$ for all $t \in [{\sf T}]$, which makes $\E \insquare{\sum_{t=0}^{T_i} \frac{1}{\a_t}  } = \E \insquare{  \sum_{t=0}^{T_i} \a_t  } = \E [T_i]+1$. 
\paragraph{(i a).} First, set $\w_i = \frac{N}{ \mu} \E [T_i]$ for all $i \in [N]$, where $\mu = \min_{f\in P_1} \abs{M_f}$. We choose the positive witness $\ket{w_+}$ by setting $\b_i = \frac{1}{\abs{M_f}^2}$ for all $i \in M_f$ in \eqref{eq:w-plus}, which satisfies \eqref{eq:beta-normalize}, since $\sum_{i\in M_f}\sqrt{\b_i}=1$.
Then, referring to \eqref{eq:pos-witness-size}, the witness size can be upper bounded by:
\begin{align*}
    \norm{\ket{w_+}}^2 &= 1 + N  \sum_{i\in M_f} \frac{\mu}{N \abs{M_f}^2 \E[T_i]} \inparen{3 + 2 \E [T_i]+2}
    \\
    &\leq 1 + 7\frac{\mu}{\abs{M_f}} \leq 8, 
\end{align*}
resulting in $\frac{\braket{w_+}{\psi_0}}{\norm{\ket{w_+}}^2} \geq \frac{1}{c_+}$ for $c_+ = 8$. The size of the negative witness from \eqref{eq:neg-witness-size} can be upper bounded by $\cC_-$ which is
\begin{align*}
    \norm{\ket{w_\cA}}^2 &= 1 +  \frac{1}{N}\sum_{i \in [N]} \frac{N\E[T_i]}{\mu} \inparen{3 + 2\E[T_i]+2}
    = \O \inparen{\frac{ \sum_{i \in [N]} \E[T_i]^2 }{\mu}}, 
\end{align*}
so there is some value $\cC_-=O\left(\frac{1}{\mu}\sum_{i \in [N]} \E[T_i]^2\right)$ that always upper bounds $\norm{\ket{w_{\cal A}}}^2$. 
Thus, by \Cref{thm:phase-est-fwk} and \Cref{lem:reflect-cost}, there is a bounded-error quantum algorithm that distinguishes the all 0 case from $f\in P_1$ with complexity:
\begin{align}
    \O \inparen{\sqrt{\frac{ \sum_{i\in[N]} \E[T_i]^2  }{\min_{f \in P_1} \abs{M_f}}} \log \inparen{N {\sf T}}}.
    \label{eq:cost-exp-1}
\end{align}

\paragraph{(i b).} Alternatively, set $\w_i = \frac{N}{k\E [T_i]}$ for all $i \in [N]$, where $k=\sum_{j\in M_f}\frac{1}{\E[T_j]^2}$. We choose the positive witness $\ket{w_+}$ by setting $\b_i = \frac{1}{\E[T_i]^4k^2}$ for all $i \in M_f$ in \eqref{eq:w-plus}, which satisfies \eqref{eq:beta-normalize}, since $\sum_{i\in M_f}\sqrt{1/\E[T_i]^4}=k$.
Then, referring to \eqref{eq:pos-witness-size}, the witness size can be upper bounded by:
\begin{align*}
    \norm{\ket{w_+}}^2 &= 1 + N  \sum_{i\in M_f} \frac{k\E[T_i]}{N\E[T_i]^4k^2} \inparen{3 + 2 \E [T_i]+2 }
    \\
    &\leq 1 + 7\frac{\sum_{i\in M_f}\frac{\E[T_i]^2}{\E[T_i]^4}}{k} \leq 8, 
\end{align*}
resulting in $\frac{\braket{w_+}{\psi_0}}{\norm{\ket{w_+}}^2} \geq \frac{1}{c_+}$ for $c_+ = 8$. The size of the negative witness from \eqref{eq:neg-witness-size} can be upper bounded by $\cC_-$ which is
\begin{align*}
    \norm{\ket{w_\cA}}^2 &= 1 +\frac{1}{N}  \sum_{i \in [N]} \frac{N}{k\E[T_i]} \inparen{3 + 2\E[T_i]+2}
    = \O \inparen{\frac{ N}{k}}.
\end{align*}
Thus, by \Cref{thm:phase-est-fwk} and \Cref{lem:reflect-cost}, there is a bounded-error quantum algorithm that distinguishes the all 0 case from $f\in P_1$ with complexity:
\begin{align}
    \O \inparen{\sqrt{\frac{N}{k}}\log(N{\sf T})} = \O\left(\sqrt{\frac{N}{ \sum_{i\in M_f} \frac{1}{\E[T_i]^2}}  } \log \inparen{N {\sf T}}\right).
    \label{eq:cost-exp-1b}
\end{align}

\paragraph{(ii) Unknown costs.} In the case when the costs $\E[T_i]$ are unknown, the weight $\w_i$, which defines the spaces around which we reflect, cannot depend on $\E [T_i]$ -- it will actually be independent of $i$ in this setting. We can compensate for this by using different settings of the weights $\a_t$, as we will see shortly.
The parameter $\b_i$ defining the positive witness may also depend on the value $\E[T_i]$, as it only appears in the analysis and the algorithm itself doesn't depend on $\b_i$.

\paragraph{(ii a).}
First, we choose the setting $\a_t = t+1$, which results in $\E \insquare{ \sum_{t =0}^{T_i} \frac{1}{\a_t} } \leq 2\E [\log T_i]$ and $\E \insquare{ \sum_{t=0}^{T_i} \a_t} \leq \E[T_i^2]$.
We set $\w_i = \frac{N\log {\sf T}}{\mu}$ where $\mu=\min_{f' \in P_1} \abs{M_{f'}}$ and choose uniform $\b_i = \frac{1}{\abs{M_f}^2}$ (so $\sqrt{\b_i} = 1/\abs{M_f}$ and $\sum_{i\in M_f}\sqrt{\b_i} = 1$ as required by \eqref{eq:beta-normalize}).
Using \eqref{eq:pos-witness-size}:
\begin{align*}
    \norm{\ket{w_+}}^2
    &\leq 1 + N \sum_{i \in M_f} \frac{\mu \b_i}{N\log {\sf T}}\inparen{3 + 4\E[\log T_i]}
    \;\leq\; 1 + \frac{\mu}{\log{\sf T}}\sum_{i\in M_f}\frac{1}{|M_f|^2}7\E[\log T_i]\\
    &\leq 1+7\frac{\mu|M_f|}{|M_f|^2}\leq 8,
\end{align*}
so $\frac{\braket{w_+}{\psi_0}}{\norm{\ket{w_+}}^2} \geq \frac{1}{c_+}$ with $c_+ = 8$. The negative-witness size from~\eqref{eq:neg-witness-size} is
\begin{align*}
    \norm{\ket{w_\cA}}^2
    &\leq 1 + \frac{1}{N}\cdot\frac{N\log {\sf T}}{\mu}\sum_{i\in[N]}\inparen{3 + 2\E[T_i^2]}
    \;=\; \O \inparen{\frac{\sum_{i\in[N]} \E[T_i^2]}{\mu}\log {\sf T}}.
\end{align*}
Thus, by \Cref{thm:phase-est-fwk} and \Cref{lem:reflect-cost}, there is a bounded-error quantum algorithm that distinguishes the all 0 case from $f\in P_1$ with complexity:
\begin{align}
    \O \inparen{\sqrt{\frac{ \sum_{i\in[N]} \E[T_i^2] }{\mu}\log{\sf T}} \log \inparen{N {\sf T}}}
    = \tO \inparen{\sqrt{\frac{\sum_{i\in[N]} \E[T_i^2]}{\min_{f'\in P_1} \abs{M_{f'}}}}}.
    \label{eq:cost-exp-2a}
\end{align}
\paragraph{(ii b).}  
We can get a different upper bound by choosing $\a_t = 1$ for all $t$, which results in $\E \insquare{ \sum_{t=0}^{T_i} \frac{1}{\a_t} } = \E \insquare{ \sum_{t=0}^{T_i} \a_t} = \E [T_i]+1$, just as in setting \textbf{(i)}.
However, unlike in setting \textbf{(i)}, the weights $\w_i$ cannot depend on $\E[T_i]$, so instead we set $\w_i = \frac{N}{k}$ for 
$k = \min_{f' \in P_1} \sum_{j \in M_{f'}}\frac{1}{\E [T_j]}$, and choose $\b_i = \inparen{\frac{\frac{1}{\E[T_i]}}{ \sum_{j \in M_f} \frac{1}{\E[T_j]}}  }^2$, which ensures that \eqref{eq:beta-normalize} is satisfied, since $\sum_{i\in M_f}\sqrt{\beta_i}=1$.
Write $S_f := \sum_{j \in M_f} 1/\E[T_j]$, so $k \le S_f$ for every $f\in P_1$. Using \eqref{eq:pos-witness-size}:
\begin{align*}
    \norm{\ket{w_+}}^2
    &\leq 1 + N \sum_{i \in M_f} \frac{k\,\b_i}{N}\inparen{3 + 2\E[T_i]+2}
    \;=\; 1 + \frac{k}{S_f^2}\sum_{i\in M_f}\frac{7\E[T_i]}{\E[T_i]^2}
    =1 + \frac{7k}{S_f^2}\sum_{i\in M_f}\frac{1}{\E[T_i]}
    \\
    &\leq 1 + \frac{7}{S_f}\sum_{i\in M_f}\frac{1}{\E[T_i]}
    \qquad \text{(using $k \leq S_f$)}
    \\
    &=1+7=8,
\end{align*}
so $\frac{\braket{w_+}{\psi_0}}{\norm{\ket{w_+}}^2} \geq \frac{1}{c_+}$ with $c_+ = 8$. The negative witness size from~\eqref{eq:neg-witness-size} is
\begin{align*}
    \norm{\ket{w_\cA}}^2
    &= 1 + \frac{1}{N}\cdot\frac{N}{k}\sum_{i\in[N]}\inparen{3 + 2\E[T_i]+2}
    \;=\; \O \inparen{\frac{\sum_{i\in[N]} \E[T_i]}{k}}.
\end{align*}
Thus, by \Cref{thm:phase-est-fwk} and \Cref{lem:reflect-cost}, there is a bounded-error quantum algorithm that distinguishes the all 0 case from $f\in P_1$ with complexity:
\begin{align}
    \O \inparen{\sqrt{\frac{ \sum_{i\in[N]} \E[T_i]  }{\min_{f' \in P_1} \sum_{j \in M_{f'}} 1/\E[T_j]}} \log \inparen{N {\sf T}}}
    =
    \tO \inparen{\sqrt{\frac{ \sum_{i\in [N]} \E[T_i] }{\min_{f'\in P_1}  \sum_{j\in M_{f'}} \frac{1}{\E[T_j]} }}}.
    \label{eq:cost-exp-2b}
\end{align}
\paragraph{(ii c).}  
Finally, we can get a different complexity by setting $\a_t = \frac{1}{t+1}$ , which results in $\E \insquare{ \sum_{t \in [T_i]'} \frac{1}{\a_t} } \leq \E [T_i^2]$ and $\E \insquare{ \sum_{t\in[T_i]'} \a_t} \leq 2\E[\log T_i]$.
We set $\w_i = \frac{N}{ k }$ for
$k = \min_{f' \in P_1} \sum_{j \in M_{f'}}\frac{1}{\E [T_j^2]}$ and use $\b_i = \inparen{\frac{\frac{1}{\E[T_i^2]}}{ \sum_{j \in M_f} \frac{1}{\E[T_j^2]}}  }^2$, which satisfies \eqref{eq:beta-normalize} since $\sum_{i\in M_f}\sqrt{\b_i}=1$.
Write $S_f := \sum_{j \in M_f} 1/\E[T_j^2]$, so $k \le S_f$ for every $f\in P_1$. Using \eqref{eq:pos-witness-size}:
\begin{align*}
    \norm{\ket{w_+}}^2
    &\leq 1 + N \sum_{i \in M_f} \frac{k\,\b_i}{N}\inparen{3 + 2\E[T_i^2]}
    \;\leq\; 1 + \frac{k}{S_f^2}\sum_{i\in M_f}\frac{5\E[T_i^2]}{\E[T_i^2]^2}
    \\
    &= 1 + \frac{5k}{S_f^2}\sum_{i\in M_f}\frac{1}{\E[T_i^2]}
    \qquad\qquad\quad\leq 1 + \frac{5}{S_f}\sum_{i\in M_f}\frac{1}{\E[T_i^2]} 
    \qquad\text{(using $k \leq S_f$)}
    \\
    &= 1 + 5 \;=\; 6,
\end{align*}
so $\frac{\braket{w_+}{\psi_0}}{\norm{\ket{w_+}}^2} \geq \frac{1}{c_+}$ with $c_+ = 6$. The negative-witness size from~\eqref{eq:neg-witness-size} is
\begin{align*}
    \norm{\ket{w_\cA}}^2
    &= 1 + \frac{1}{N}\cdot\frac{N}{k}\sum_{i\in[N]}\inparen{3 + 4\E[\log T_i]}
    \;=\; \O \inparen{\frac{\sum_{i\in[N]} \E[\log T_i]}{k}}
    \;=\; \O \inparen{\frac{N \log {\sf T}}{k}}.
\end{align*}
Thus, by \Cref{thm:phase-est-fwk} and \Cref{lem:reflect-cost}, there is a bounded-error quantum algorithm that distinguishes the all 0 case from $f\in P_1$ with complexity:
\begin{equation}
    \O \inparen{\sqrt{\frac{ N \log {\sf T}  }{k}} \log \inparen{N {\sf T}}}
    = \tO \inparen{ \sqrt{\frac{N}{\min_{f\in P_1}  \sum_{j \in M_f} \frac{1}{\E[T_j^2]} }}}.
    \qedhere
    \label{eq:cost-exp-2c}
\end{equation}
\end{proof}

\bibliographystyle{alpha}
\bibliography{refs}

\end{document}